\theoremstyle{plain}
\newtheorem{theorem}{Theorem}[section]
\newtheorem{corollary}[theorem]{Corollary}
\newtheorem{definition}[theorem]{Definition}
\newtheorem{lemma}[theorem]{Lemma}
\newtheorem*{remark}{Remark}
\newtheorem{proposition}[theorem]{Proposition}
\newtheorem{fact}[theorem]{Fact}
\def\ldouble{\langle\!\!\langle}
\def\rdouble{\rangle\!\!\rangle}
\newcommand{\subword}{\sqsubseteq}
\newcommand{\supword}{\sqsupseteq}
\newcommand{\Cu}{C_{\uparrow}}
\newcommand{\Cd}{C_{\downarrow}}
\newcommand{\Ku}{K_{\uparrow}}
\newcommand{\Kd}{K_{\downarrow}}
\newcommand\Pre{{\mathit{Pre}}}
\newcommand\Post{{\mathit{Post}}}
\newcommand\wPre{{\widetilde{\Pre}}}
\newcommand\Dist{{\mathit{Dist}}}
\newcommand\bfF{{\mathbf{F}}}
\newcommand{\Nat}{{\mathbb{N}}} 
\renewcommand{\epsilon}{\varepsilon} 
\renewcommand{\setminus}{\smallsetminus} 
\renewcommand{\phi}{\varphi} 
\newcommand{\sat}{\models}      
\newcommand{\Reg}{{\mathbf{Reg}}}
\newcommand{\Moves}{\mathsf{Moves}}
\newcommand{\Ena}{\mathsf{Enabled}}
\newcommand{\Ch}{{\mathsf{C}}} 
\newcommand{\ttc}{{\mathtt{ch}}}
\newcommand{\PR}{{{\mathbb{P}}}} 
\newcommand{\bfP}{{\textbf{P}}} 
\newcommand{\Rcal}{{\mathcal{R}}}   
\newcommand{\Tcal}{{\mathcal{T}}}
\newcommand{\Next}{{\bigcirc}}
\newcommand{\Conf}{{\mathit{Conf}}} 
\newcommand{\ConfA}{{\Conf_{\!\!A}}}
\newcommand{\ConfB}{{\Conf_{\!\!B}}}
\newcommand{\ConfS}{{\Conf_{\!\!S}}}
\newcommand{\WinA}{{\ldouble A \rdouble}} 
\newcommand{\HA}{{H_{1,r}}}
\newcommand{\Pexists}{{\Pre^{\exists}}}
\newcommand{\Pforall}{{\Pre^{\forall}}}
\newcommand{\PxA}{{\Pre^{\otimes}_{\!A}}}
\newcommand{\env}{{\textit{env}}}
\newcommand{\Mess}{{\mathsf{M}}} 
\newcommand{\op}{{\mathit{op}}}
\newcommand{\Ord}{{\mathit{Ord}}}
\newcommand{\sizeof}[1]{{\mathopen{|}#1\mathclose{|}}}
\newcommand{\overto}[1]{\xrightarrow{\!\!#1\!\!}}
\newcommand{\step}[1]{\overto{#1}} 
\def\leadsto{\rightsquigarrow}
\newcommand{\dup}{\mathrm{dup}}
\newcommand{\Tdup}{\Tcal_\dup}
\newcommand{\invTdup}{\Tcal^{-1}_\dup}
\newcommand{\winvTdup}{\widetilde{\Tcal^{-1}_\dup}}
\newcommand{\egdef}{\stackrel{\mbox{\begin{scriptsize}def\end{scriptsize}}}{=}}
\newcommand{\equivdef}{\stackrel{\mbox{\begin{scriptsize}def\end{scriptsize}}}{\Leftrightarrow}}
\newcommand{\G}{{\mathcal{G}}} 
\begin{document}

        \title{Solving Stochastic B\"uchi Games on Infinite Arenas
        with a Finite Attractor\thanks{Supported by
Grant ANR-11 BS02-001.}}

\iflipicsstyle

        \author[1]{Nathalie Bertrand}
        \author[2]{Philippe Schnoebelen}
        \authorrunning{Bertrand \& Schnoebelen}
        \affil[1]{Inria Rennes Bretagne Atlantique}
        \affil[2]{LSV, ENS Cachan, CNRS}

\else
\ifthenelse{\boolean{eptcsstyle}}{ 
        \author{
Nathalie Bertrand 
\institute{Inria Rennes Bretagne Atlantique}
\and
Philippe Schnoebelen
\institute{LSV, ENS Cachan, CNRS}
        }
\def\titlerunning{Solving Stochastic B\"uchi Games on Infinite Arenas with a Finite Attractor}
\def\authorrunning{N.~Bertrand and Ph.~Schnoebelen}
}{                              

        \author{Nathalie Bertrand \and Philippe Schnoebelen}

}
\fi

\maketitle

\begin{abstract}
  We consider games played on an infinite probabilistic arena where
  the first player aims at satisfying generalized B{\"u}chi objectives
  almost surely, i.e., with probability one. We provide a fixpoint
  characterization of the winning sets and associated winning
  strategies in the case where the arena satisfies the finite-attractor
  property. From this we directly deduce the decidability of these
  games on probabilistic lossy channel systems.
\end{abstract}


\section{Introduction}
\label{sec-intro}

2-player stochastic games are games where two players, Alice and
Bob, interact in a probabilistic environment. Given an objective
formalized, e.g., as an $\omega$-regular condition, the goal for
Alice is to maximize the probability to fulfill the condition, against
any behaviour of her opponent. Qualitative questions ask whether Alice
can win almost-surely (resp.\ positively) from a given initial
configuration. Solving a stochastic game then amounts to deciding the
latter question, as well as providing winning strategies for the
players.  In the case where the arena is finite, the literature offers
several general results on the existence of optimal strategies, the
determinacy of the games, and algorithmic methods for computing
solutions, when the objectives range in complexity from simple
reachability objectives to arbitrary Borel
objectives~\cite{shapley53,condon-ic92,CJH-csl03}.

For infinite arenas, general results are scarce and mostly concern
purely mathematical, non-algo\-rith\-mi\-cal, aspects, such as
determinacy~\cite{martin98,brazdil2013}. An obvious explanation for the lack of
algorithmical results is that infinite-state spaces usually lead to
undecidable, even highly undecidable, problems. This already happens
for the simplest objectives with a single player and no stochastic aspects.

Decidability can be regained for infinite arenas if it is known that they
are generated in some specific way.
The stochastic games on infinite arenas considered in the field of algorithmic
verification  originate from
classical ---i.e., non-stochastic and non-competitive--- infinite-state models. Prominent examples with positive
results are stochastic games on systems with
recursion~\cite{EY-lmcs08,BBKO-ic11},  on one-counter
automata~\cite{BBE-fsttcs10,BBEK-icalp11}, and on
lossy channel systems~\cite{BBS-acmtocl06,ABdAMS-fossacs08}. In all
these examples, the description of winning sets and winning strategies
is specific to the underlying infinite-state model, and rely
on ad-hoc techniques.

In this paper we follow a more generic approach, and study stochastic
games on (finite-choice) infinite arenas where we only assume the
finite-attractor property~\cite{BBS-ipl}. That is, we assume that some
finite set of configurations is almost-surely visited infinitely
often, independently of the behaviors of the players.

\paragraph{Our contributions.}
Our first contribution is a simple fixpoint characterization of the winning
sets and associated winning strategies for generalized B\"uchi
objectives with probability one. The characterization is not concerned with
computability and applies to any finite-choice countable arena with a finite attractor.
We use $\mu$-calculus notation to define, and reason about, the winning
sets and winning strategies: one of our goals is to give a \emph{fully
  detailed} generic correctness proof (we used the characterization 
without proof in~\cite[section 9.2]{BS-fmsd2012}).

Our second contribution is an application of the above
characterization to prove the computability of winning sets (for
generalized B\"uchi objectives) in arenas generated by probabilistic
lossy channel systems (PLCSs).  Rather than using ad-hoc reasoning, we
follow the approach advocated in~\cite{BBS-lpar,BS-fmsd2012} and use
a generic finite-time convergence theorem for well-structured
transition systems (more generally: for fixpoints over the powersets
of WQO's). This allows us to infer the computability and the regularity of
the winning sets directly from the fact that their fixpoint
characterization uses ``upward-guarded'' fixpoint terms built on
regularity-preserving operators. The method easily
accommodates arbitrary regular arena partitions,
PLCSs extended with regular guards, and other kinds of unreliability.

\paragraph{Related work on lossy channel systems.}
An early positive result for stochastic games on probabilistic lossy
channel systems is the decidability of single-player reachability or
B\"uchi games with probability one (dually, safety or co-B\"uchi with
positive probability)~\cite{BBS-acmtocl06}. Abdulla \emph{et al.} then
proved the determinacy and decidability of two-player stochastic games
on PLCSs for (single) B\"uchi objectives with probability
one~\cite{ABdAMS-fossacs08} and we gave a simplified and generalized proof in~\cite{BS-fmsd2012}. On PLCSs, these positive results cannot be
extended much ---in particular to parity objectives--- since B\"uchi
games with positive probability are undecidable, already in the case
of a single player~\cite{BBS-acmtocl06}.  Attempts to extend the
decidability beyond (generalized) B\"uchi must thus abandon some
generality in other dimensions, e.g., by restricting to finite-memory
strategies, as in the one-player case~\cite{BBS-acmtocl06}.

\paragraph{Outline of the paper.}
Section~\ref{sec-stoch-games} introduces the necessary concepts and
notations on turn-based stochastic games. Section~\ref{sec-charac}
provides the characterization of winning configurations in the general
case of arenas with a finite attractor. Section~\ref{sec-lcs} focuses
on stochastic games on lossy channel systems and explains how
decidability is obtained.



\section{Stochastic games with a finite attractor}
\label{sec-stoch-games}

We consider general 2-player stochastic turn-based games on
finite-choice countable
arenas. In such games, the two players choose moves in turns and the
outcome of their choice is probabilistic.

\begin{definition}
  A \emph{turn-based stochastic arena} is a tuple $\G =
  (\Conf,\Moves,\bfP)$ such that $\Conf$ is a countable set of
  configurations partitioned into $\ConfA \sqcup \ConfB$, $\Moves$ is
  a set of moves, and $\bfP: \Conf \times \Moves \to \Dist(\Conf)$ is
  a partial function whose values are probabilistic distribution of
  configurations.
\\
We say that move $m$ is \emph{enabled} in configuration $c$ when
$\bfP(c,m)$ is defined.
\\
  $\G$ is \emph{eternal} (also \emph{deadlock-free}) if for all $c$ there
is some enabled $m$.
\end{definition}
The set of possible configurations $\Conf$ of the game is partitioned
into configurations ``owned'' by each of the players: in some
$c\in\ConfA$, player $A$, or ``Alice'', chooses the next move, while
if $c\in\ConfB$, it is player $B$, ``Bob'', who chooses.  It is useful
to consider informally that, beyond Alice and Bob, there is a third
party called ``the environment'' who is responsible for the
probabilistic behaviors.  This is why the game is stochastic: after
each move $m$ of one of the players, the environment chooses the next
configuration probabilistically according to $\bfP(c,m)$.
For a configuration $c$, when move $m \in \Moves$ is selected, we
write $\Post[m](c)$ for the set of possible configurations from $c$
after $m$: $\Post[m](c) \egdef \{c'\in \Conf \mid \bfP(c,m)(c') >0\}$,
and, symmetrically, $\Pre[m](c)  \egdef \{c'\in \Conf \mid
\bfP(c',m)(c) >0\}$ denotes the set of possible
predecessors by $m$.

\paragraph{Runs and strategies.}
For simplification purposes, 
we assume in the rest of this paper that all arenas are  eternal,
aka deadlock-free.  A \emph{run} of $\G$ is a (non-empty) sequence
$\rho \in \Conf^* \cup \Conf^\omega$, finite or infinite, of
configurations.  A \emph{strategy} for player $A$ resolves all
non-deterministic choices in $\ConfA$ by mapping every run ending in
an $A$-configuration (i.e., a configuration $c\in\ConfA$) to a move
enabled in $c$. Formally, a strategy $\sigma$ for Alice (an $A$-strategy) is a mapping
$\sigma : \Conf^* \ConfA \to \Moves$ such that, for every history run
$\rho = c_0 c_1 \cdots c_n$ with $c_n \in \ConfA$, $\sigma(\rho)$
is enabled in $c_n$.  Symmetrically, a strategy for Bob (a $B$-strategy)
is a mapping $\tau : \Conf^* \ConfB \to \Moves$ which assigns an
enabled move with each history run ending in $\ConfB$. The pair of
strategies $(\sigma,\tau)$ is called a \emph{strategy profile}. Note
that in this paper we restrict to pure, also called deterministic,
strategies. Allowing for mixed, aka randomized, strategies would not
change the winning configurations~\cite{CdAH-qest04}.

Not all runs agree with a given strategy profile. We say that a finite
or infinite run $\rho = c_0 c_1\cdots c_n \cdots$ is \emph{compatible}
with $(\sigma,\tau)$ if for every prefix $\rho_i = c_0\cdots c_i$ of
$\rho$, $c_i \in \ConfA$ implies $\bfP(c_i,\sigma(\rho_i))(c_{i+1})
>0$, and $c_i \in \ConfB$ implies $\bfP(c_i,\tau(\rho_i))(c_{i+1})
>0$.

\paragraph{Probabilistic semantics.}
The behavior of $\G$ under strategy profile $(\sigma,\tau)$ is
described by an infinite-state Markov chain $\G_{\sigma,\tau}$ where
the states are all the finite runs compatible with $(\sigma,\tau)$, and where
there is a transition from $\rho_i$ to $\rho_{i+1} = \rho_i \cdot
c_{i+1}$ with probability $\bfP(c_i,\sigma(\rho_i))(c_{i+1})$ if
$c_i \in \ConfA$, and $\bfP(c_i,\tau(\rho_i))(c_{i+1})$ if $c_i \in
\ConfB$.  Standardly ---see, e.g., \cite{puterman94} for details---
with the Markov chain $\G_{\sigma,\tau}$ and a starting configuration
$c_0$, is associated a probability measure 
on the set of runs of $\G$ starting with $c_0$ and where behaviors are
ruled by $(\sigma,\tau)$.

It is well-known that given $\varphi$ an LTL formula where
atomic propositions are arbitrary sets of configurations, the set of
runs that satisfy $\varphi$ is
measurable.
Below we write $\PR_{\sigma,\tau}(c_0 \sat \varphi)$ for the
measure of runs of $\G_{\sigma,\tau}$ that start with $c_0$ and
satisfy $\varphi$,  and use the standard ``$\Box$'',
``$\Diamond$'' and ``$\Next$'' symbols for linear-time modalities
``always'', ``eventually'' and ``next''.

\paragraph{Game objectives.}
Given a stochastic arena $\G$, the objective of the game describes the
goal Alice aims at achieving. In this paper we consider generalized B\"uchi
objectives. Let $R_1, \ldots, R_r \subseteq \Conf$ be $r$ sets of
configurations, with an associated generalized B\"uchi property
$\varphi = \bigwedge_{i=1}^r \Box\Diamond R_i$. We consider the game
on $\G$ where Alice's objective is to satisfy $\varphi$ with probability
one.

We say that an $A$-strategy $\sigma$ is \emph{almost-surely
  winning} from $c_0$ for
objective $\varphi$ if for every $B$-strategy  $\tau$,
$\PR_{\sigma,\tau}(c_0 \sat \varphi)=1$. In this case, we say that
configuration $c_0$ is \emph{winning} (for Alice). The set of winning
configurations is denoted $\WinA^{=1} \varphi$, using PATL-like
notation~\cite{chen2007,baier2012b}.

\paragraph{Finite attractor.}
In this paper, we focus on a subclass of stochastic arenas, namely those
with a finite attractor, following a terminology introduced
in~\cite{ABRS-icomp}. We say that a subset $F \subseteq \Conf$ is a
\emph{finite attractor} for the arena $\G$ if (1) $F$ is finite, and (2)
for every initial configuration $c_0$ and for every strategy profile
$(\sigma,\tau)$, $\PR_{\sigma,\tau}(c_0 \models \Box \Diamond F) =1$. In
words, $F$ is almost surely visited infinitely often under all strategy
profiles. Note that an attractor is not what is called a recurrent set in
Markov chains, since ---depending on $c_0$ and $(\sigma,\tau)$--- it does
not necessarily hold that all configurations in $F$ are visited infinitely
often almost surely. An attractor is also not an absorbing set since the players may
leave $F$ after visiting it ---but they will almost surely return to it. Note
also that in game theory one sometimes uses the term ``attractor'' to
denote a set from where one player can ensure to reach a given goal,
something that we call a winning set.
The existence of a finite attractor is a powerful tool for reasoning
about infinite runs in countable Markov chains, see examples
in~\cite{ABRS-icomp,BBS-ipl,rabinovich2006,BBS-acmtocl06,ABdAMS-fossacs08}.

\paragraph{Finite-choice hypothesis.}
Beyond the finite attractor property, we also require that the
adversary, Bob, only has finitely many choice: more precisely, we
assume that in every configuration of $\ConfB$, the set of enabled
moves is finite. Note that we do not assume a \emph{uniform} bound on the
number of moves enabled in Bob's configurations, and also that the
finite-choice hypothesis only applies to Bob, the adversarial player.
These are called \emph{$\diamond$-finitely-branching games} in
\cite{brazdil2013}, and are not a strong restriction in applications,
unlike the finite-attractor
assumption that is usually not satisfied in practice.  We want to stress
that we allow infinite arenas that are \emph{infinitely branching}
both for Alice ---she may have countably many enabled moves in a given
$c$--- and for the environment ---$\Post[m](c)$ may be infinite for
given $c$ and $m$---, and thus are not
\emph{coarse}, i.e., non-zero probabilities are not bounded from
below.



\section{Solving generalized B\"uchi games}
\label{sec-charac}

In this section we provide a simple fixpoint characterization of the set of
winning configurations (and of the associated winning strategies) for games
with a generalized B\"uchi objective that should be satisfied
almost-surely.
For this characterization and its proof of correctness, we use terms with
fixpoints combining functions and constants over the complete lattice
$2^\Conf$ of all sets of configurations.

\subsection{A $\mu$-calculus for fixpoint terms}

We assume familiarity with $\mu$-calculus notation and only recall the
basic concepts and notations we use below. The reader is referred
to~\cite{arnold2001,bradfield2007} for more details.

The set of subsets of configurations ordered by inclusion,
$(2^\Conf,\subseteq)$, is a complete Boolean lattice. We consider
\emph{monotonic operators}, i.e., $n$-ary mappings
$f:(2^\Conf)^n\to(2^\Conf)$ such that $f(U_1,\ldots,U_n)\subseteq
f(V_1,\ldots,V_n)$ when $U_i\subseteq V_i$ for all $i=1,\ldots,n$. (A
\emph{constant} $U\subseteq \Conf$ is a $0$-ary monotonic operator.)
Formally, the language $L_\mu=\{\phi,\psi,\ldots\}$ of \emph{terms
  with fixpoints} is given by the following abstract grammar
\[
 \phi
::=
	       f(\phi_1,\ldots,\phi_n)		\ \big| \
	       X				\ \big| \
	       \mu X.\phi			\ \big| \
	       \nu X. \phi
\]
where $f$ is any $n$-ary monotonic operator and $X$ is any variable.
Terms of the form $\mu X.\phi$ and $\nu X.\phi$ are least and greatest
fixpoint expressions.

The complementation operator $\neg$, defined with $\neg
U=\Conf\setminus U$, may be used as a convenience when writing down
$L_\mu$ terms as
long as any bound variable is under the scope of an even number of
negations. Such terms can be rewritten in positive forms by using the
\emph{dual} $\widetilde{f}$ of any $f$, defined with
$\widetilde{f}(U_1,\ldots,U_n)\egdef \neg f(\neg U_1,\ldots,\neg
U_n)$. Note that $\widetilde{f}$ is monotonic since $f$ is.

The semantics of $L_\mu$ terms is as expected (see~\cite{BS-fmsd2012,bradfield2007}). Since
we only use monotonic operators in our fixpoint terms, all the terms have a
well-defined interpretation as a subset of $\Conf$ for closed terms,
and more generally as a monotonic $n$-ary mapping over $2^\Conf$ for terms
with $n$-free variables. We slightly abuse notation, letting
e.g.\ $\phi(X_1,\ldots,X_n)$ denote both a term in $L_\mu$ and its
denotation as an $n$-ary monotonic operator. Similarly,
$\phi(\psi_1,\ldots,\psi_n)$ is the term obtained by substituting
$\psi_1,\ldots,\psi_n\in L_\mu$ for the (free occurrences of) the $X_i$'s
in $\phi$. Finally, when $U_1,\ldots,U_n\subseteq\Conf$ are constants,
$\phi(U_1,\ldots,U_n)$ also denotes the application of the operator defined by
$\phi$ over the $U_i$'s.

When reasoning on fixpoint terms, one often uses \emph{unfoldings},
i.e., the following equalities stating that a least or
greatest fixpoint is indeed a fixpoint:
\begin{xalignat*}{2}
 \mu X.\phi(X,\ldots)
&=
\phi(\mu X.\phi(X,\ldots), \ldots)
\:,
&
 \nu X.\phi(X,\ldots)
&=
\phi(\nu X.\phi(X,\ldots), \ldots)
\:.
\end{xalignat*}
Recall that the least (or greatest) fixpoint is the least \emph{pre-fixpoint}
(greatest \emph{post-fixpoint}):
\begin{xalignat*}{2}
\phi(U)\subseteq U &\mbox{ implies } \mu X.\phi(X) \subseteq U
\:,
&
\phi(U)\supseteq U &\mbox{ implies } \nu X.\phi(X) \supseteq U
\:.
\end{xalignat*}

It is well-known (Kleene's fixpoint theorem) that when monotonic
operators are $\bigcup$- and $\bigcap$-continuous, ---i.e., satisfy
$f(\bigcup_{i} U_i)=\bigcup_{i} f(U_i)$ and $f(\bigcap_{i}
U_i)=\bigcap_{i} f(U_i)$---, their least and greatest fixpoints are
obtained as the limits of $\omega$-length sequences of approximants.
Since we do not assume $\bigcap/\bigcup$-continuity in our setting ---e.g.,
$\Pre$ is not $\bigcap$-continuous when finite-branching is not
required---, fixpoints are obtained as the stationary limits of
transfinite ordinal-indexed sequences of approximants,
see~\cite{bradfield2007}. For a set $U=\mu X.\phi(X)$ defined as a
least fixpoint, the approximants $(U_\alpha)_{\alpha\in\Ord}$ are
defined inductively with $U_0\egdef\emptyset$,
$U_{\beta+1}\egdef\phi(U_\beta)$ for a successor ordinal, and
$U_\lambda\egdef \bigcup_{\beta<\lambda}U_\beta$ for a limit ordinal
$\lambda$. For a greatest fixpoint $V=\nu X.\phi(X)$, they are given
by $V_0\egdef \Conf$, $V_{\beta+1}\egdef\phi(V_\beta)$ and
$V_\lambda\egdef \bigcap_{\beta<\lambda}V_\beta$.

\subsection{A characterization of winning sets}

We first introduce auxiliary operators that let us reason about
strategies and characterize the winning sets. Let
$\Ena(c)\subseteq\Moves$ denote the set of moves enabled in
configuration $c$ and for $X,Y \subseteq \Conf$ let
\begin{align}
\notag
\Pexists(X,Y) &\egdef \{c \in \Conf \mid \exists m \in \Ena(c), \Post[m](c)
\subseteq X \textrm{ and } \Post[m](c) \cap Y \neq \emptyset\} \:,\\
\notag
\Pforall(X,Y) &\egdef \{c \in \Conf \mid \forall m \in \Ena(c), \Post[m](c)
\subseteq X \textrm{ and } \Post[m](c) \cap Y \neq \emptyset\}\:.
\end{align}
One can see that $\Pexists$ and $\Pforall$ are monotonic in both
arguments by reformulating their definitions in terms of the more
familiar $\Pre$ operator (recall that $c\in\wPre[m](\emptyset)$ iff
$m$ is not enabled in $c$):
\begin{align*}
\Pexists(X,Y) &= \bigcup_{\!\!\!\!\!\!\!\!m \in \Moves\!\!\!\!\!\!\!\!} \bigl[\wPre[m](X) \cap
\Pre[m](Y)\bigr]\:,
\\
\Pforall(X,Y) &= \bigcap_{\!\!\!\!\!\!\!\!m \in
  \Moves\!\!\!\!\!\!\!\!}
\bigl(\wPre[m](\emptyset)\cup\bigl[\wPre[m](X)\cap \Pre[m](Y)\bigr]\bigr)\:.
\end{align*}
We further define
\(
\PxA(X,Y) \egdef \bigl(\ConfA \cap \Pexists(X,Y)\bigr) \cup \bigl(\ConfB \cap
\Pforall(X,Y)\bigr).
\)
In other words, $\PxA(X,Y)$ is exactly the set from where Alice can guarantee in one step
to have $X$ surely and $Y$ with positive probability. This can be
summarized as:
\begin{fact}
\label{fact:pxa}
Let $X,Y \subseteq \Conf$.
\\
1.\ If $c \in \PxA(X,Y)$, then, $A$ has a
memoryless strategy $\sigma$ such that, for every strategy $\tau$ for
$B$: $\PR_{\sigma,\tau}(c \models \Next X) =1$ and
$\PR_{\sigma,\tau}(c \models \Next Y) >0$.
\\
2.\ If $c \notin \PxA(X,Y)$, then, $B$ has
a memoryless strategy $\tau$ such that, for every strategy $\sigma$
for $A$: $\PR_{\sigma,\tau}(c \models \Next X) <1$ or
$\PR_{\sigma,\tau}(c \models \Next Y) =0$.
\end{fact}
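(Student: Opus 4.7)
My plan is to prove the two items separately, each by splitting on whether $c$ belongs to $\ConfA$ or to $\ConfB$. Since the statement only concerns the \emph{one-step} modality $\Next$, the probabilities $\PR_{\sigma,\tau}(c \sat \Next X)$ and $\PR_{\sigma,\tau}(c \sat \Next Y)$ depend only on the first move taken from $c$, so memorylessness and even the behaviour of the other player on the continuation will be irrelevant. This reduces everything to a purely combinatorial argument on $\Ena(c)$ together with the defining inequalities $\bfP(c,m)(c')>0$ for $c'\in\Post[m](c)$.

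For item~1, suppose $c\in\PxA(X,Y)$. If $c\in\ConfA$, then $c\in\Pexists(X,Y)$ by definition, so there exists an enabled move $m_c$ with $\Post[m_c](c)\subseteq X$ and $\Post[m_c](c)\cap Y\neq\emptyset$. Let $\sigma$ be any memoryless $A$-strategy assigning $m_c$ to $c$ (the values on other $A$-configurations are irrelevant here). Then, against any $\tau$, the successor after one step lies in $X$ with probability $\sum_{c'\in\Post[m_c](c)}\bfP(c,m_c)(c')=1$, and lies in $Y$ with probability at least $\bfP(c,m_c)(c'')>0$ for any chosen $c''\in\Post[m_c](c)\cap Y$. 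If instead $c\in\ConfB$, then $c\in\Pforall(X,Y)$, meaning that \emph{every} enabled move enjoys the two inclusions; so no matter what move $\tau$ selects at $c$, the same calculation applies, and any memoryless $\sigma$ works.

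For item~2, suppose $c\notin\PxA(X,Y)$. If $c\in\ConfA$, then $c\notin\Pexists(X,Y)$, so every enabled $m$ satisfies either $\Post[m](c)\not\subseteq X$ (hence some $c'\notin X$ receives positive probability, giving $\PR_{\sigma,\tau}(c\sat\Next X)<1$) or $\Post[m](c)\cap Y=\emptyset$ (giving $\PR_{\sigma,\tau}(c\sat\Next Y)=0$); Bob has no choice to make at $c$, so any memoryless $\tau$ is a witness. If $c\in\ConfB$, then $c\notin\Pforall(X,Y)$, so there exists some enabled $m_c$ with either $\Post[m_c](c)\not\subseteq X$ or $\Post[m_c](c)\cap Y=\emptyset$; define the memoryless $B$-strategy $\tau$ to pick $m_c$ at $c$, and the same reasoning as in the previous subcase shows it refutes one of the two probability conditions for every $\sigma$.

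I don't expect any real obstacle: the statement is essentially a restatement of the definitions of $\Pexists$, $\Pforall$, and $\PxA$. The only point worth flagging is the mild asymmetry in item~2 between the ``$c\in\ConfA$'' and ``$c\in\ConfB$'' cases ---in the former, the negation of ``$\exists m$'' is used to handle all possible Alice strategies at once, in the latter Bob actively selects the bad move--- but both boil down to the same one-step probability computation.
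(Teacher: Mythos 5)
Your proof is correct, and it matches the paper's intent exactly: the paper records this statement as a \emph{Fact} with no proof at all, treating it as a direct restatement of the definitions of $\Pexists$, $\Pforall$ and $\PxA$, which is precisely the case analysis you carry out. The only standing hypothesis you rely on implicitly is that the arena is eternal (so $\Ena(c)\neq\emptyset$, which is needed for the $c\in\ConfB$ case of item~1 to not be vacuous), and the paper assumes this throughout.
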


\noindent
Building on $\PxA$, we may define the following unary operators: for $i=1,\ldots,r$, $H_i$ is
given by
\begin{align}
\label{eq-def-Hi}
 H_i(X)&\egdef\mu Z.X\cap \PxA\bigl(X,R_i\cup Z\bigr)
\:.
\end{align}
The intuition is that, from $H_i(X)$, Alice has a strategy ensuring a
positive probability of reaching $R_i$ later ---which would be
characterized by ``$\mu Z.\PxA\bigl(\Conf,R_i \cup Z\bigr)$''---
\emph{all the while staying surely in $X$}, hence the amendments. See
Lemma~\ref{lm:charac-Hi} for a precise statement.  Unfolding its definition, we see
that $H_i(X)\subseteq X$, i.e., $H_i$ is \emph{contractive}.

Letting $\HA(X) \egdef \bigcap_{i=1}^r H_i(X)$, we finally define the
following fixpoint terms:
\begin{align}
\label{eq-def-W}
W &\egdef \nu X. \HA(X) 
= \nu X.\bigcap_{i=1}^r\Bigl[\mu Z.X\cap\PxA\bigl(X,R_i\cup Z\bigr)\Bigr] \:,
\\
\label{eq-def-W'}
W' &\egdef \nu X.\PxA\Bigl(\HA(X),\Conf\Bigr) = \nu
X.\PxA\Bigl(\bigcap_{i=1}^r \Bigl[\mu Z.X\cap\PxA\bigl(X,R_i\cup Z\bigr)\Bigr],\Conf\Bigr) \:,
\\
\label{eq-def-W1}
W_1 &\egdef \nu X.\mu Z.\PxA\bigl(X,R_1\cup Z\bigr).
\end{align}

\begin{theorem}[Fixpoint characterization of winning sets]
\label{th-correct-W}
We fix a stochastic arena with a finite attractor, and assume it is
finite-choice for Bob. Then, for generalized B\"uchi objectives the
winning set $\WinA^{=1} \bigwedge_{i=1}^r \Box\Diamond R_i$ coincides
with $W$.  Moreover $W=W'$ and from $W$ Alice has an almost-surely
winning strategy $\sigma_W$ that is a \emph{finite-memory} strategy.

\noindent
In the case $r=1$ of simple B\"uchi objectives
the winning set $\WinA^{=1} \Box\Diamond R_1$
coincides with $W_1$ and the winning strategy $\sigma_W$ is even a \emph{memoryless} strategy.
\end{theorem}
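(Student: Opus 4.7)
The plan is to prove four things in order: a \emph{key lemma} characterizing the inner operators $H_i$, the soundness inclusion $W\subseteq\WinA^{=1}\bigwedge_{i=1}^r\Box\Diamond R_i$ (witnessed by a finite-memory strategy), the completeness inclusion $\WinA^{=1}\bigwedge_{i=1}^r\Box\Diamond R_i\subseteq W$, and the identity $W=W'$ together with the memoryless collapse for $r=1$. The backbone is the key lemma: for every $X\subseteq\Conf$, $c\in H_i(X)$ iff Alice has a \emph{memoryless} strategy $\sigma_i^X$ such that, against every Bob strategy $\tau$, $\PR_{\sigma_i^X,\tau}(c\sat\Box X)=1$ and $\PR_{\sigma_i^X,\tau}(c\sat\Diamond R_i)>0$. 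I would prove it by induction on the least-fixpoint approximants $H_i^k(X)$, which converge at~$\omega$ because the finite-choice assumption on Bob makes $\PxA(X,\cdot)$ $\bigcup$-continuous on directed families; Fact~\ref{fact:pxa} supplies the one-step move at each rank, and assigning to each configuration the witness at its minimal rank splices these local choices into a single memoryless strategy.

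For soundness, given $c_0\in W$ I would build a finite-memory $\sigma_W$ with $r$ memory states cycling through the B\"uchi conditions: in phase $i$, play the memoryless $\sigma_i^W$ from the key lemma (which is defined on $W$ since $W=\HA(W)\subseteq H_i(W)$) and advance to phase $(i\bmod r)+1$ at the first visit to $R_i$. By construction every run stays inside $W$ almost surely. It remains to show that each phase terminates almost surely: in the Markov chain induced by $\sigma_i^W$ and any $\tau$ the attractor restricts to the finite set $F\cap W$; the key lemma guarantees strictly positive hitting probabilities of $R_i$ from every configuration of $F\cap W$, whence a uniform positive lower bound by finiteness; a standard consequence of the finite-attractor property then promotes ``positive probability of reaching $R_i$ from every attractor state'' into ``$R_i$ is almost surely reached'', so phase $i$ terminates a.s. Iterating across phases, each $R_i$ is visited infinitely often almost surely.

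For completeness, let $V:=\WinA^{=1}\bigwedge_{i=1}^r\Box\Diamond R_i$. It suffices to show $V\subseteq\HA(V)$, which then gives $V\subseteq\nu X.\HA(X)=W$. Fix $i$ and suppose, for contradiction, that some $c\in V\setminus H_i(V)$. Since $c$ is winning, prefix-independence of the objective forces any almost-surely winning strategy to take at $c$ only moves with post-set entirely in $V$ (a positive-probability escape to $\Conf\setminus V$ would break almost-sure winning). Now $c\notin H_i^k(V)$ for every $k$, so inspection of the approximants shows that, among the moves with post-set in $V$, none meets $R_i\cup H_i^{k-1}(V)$. For $c\in\ConfA$ this is the contradiction directly. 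I expect the $\ConfB$ case to be the main obstacle, and it is precisely there that the finite-choice hypothesis on Bob is essential: a pigeonhole on the finite set $\Ena(c)$ produces a single Bob move $m^\star$ that avoids $R_i\cup H_i^{k-1}(V)$ for arbitrarily large $k$, hence for every $k$ by monotonicity of the approximants, and therefore avoids $R_i\cup H_i(V)$ altogether. Iterating this choice at every $B$-configuration in $V\setminus H_i(V)$, Bob builds a strategy under which the run stays forever in $V\setminus H_i(V)$ without ever visiting $R_i$, contradicting $\PR(\Box\Diamond R_i)=1$.

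For $W=W'$, the inclusion $W\subseteq W'$ is by unfolding: from $c\in W=\HA(W)$ the witness move for $H_i(W)$ already has post-set in $W=\HA(W)$, so $W$ is a post-fixpoint of $X\mapsto\PxA(\HA(X),\Conf)$. Conversely, if $c\in W'=\PxA(\HA(W'),\Conf)$ then the witness move lands surely in $\HA(W')\subseteq H_i(W')\subseteq R_i\cup H_i(W')$, so the same move witnesses $c\in\PxA(W',R_i\cup H_i(W'))$ and hence $c\in H_i(W')$ for every $i$; this yields $W'\subseteq\HA(W')$ and therefore $W'\subseteq\nu X.\HA(X)=W$. Finally, for $r=1$ a short calculation gives $H_1(X)=X\cap F(X)$ with $F(X):=\mu Z.\PxA(X,R_1\cup Z)$; since $W_1=F(W_1)$ this yields $H_1(W_1)=W_1\cap W_1=W_1$, making $W_1$ a post-fixpoint of $H_1$, so $W_1\subseteq W$; the reverse $W\subseteq W_1$ is immediate from the pointwise $H_1\subseteq F$ and $\nu$-monotonicity. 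With only one phase the cycling strategy collapses, so $\sigma_W$ becomes the memoryless $\sigma_1^W$.
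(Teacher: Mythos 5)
Your overall architecture is close to the paper's (per-objective memoryless strategies spliced into a cycling finite-memory strategy, completeness via a post-fixpoint of $\HA$, lattice manipulations for $W=W'$ and the $r=1$ collapse), but there is a genuine gap in the soundness step, precisely where the finite-choice hypothesis has to do its real work. Your key lemma only asserts that for every Bob strategy $\tau$ one has $\PR_{\sigma_i^X,\tau}(c\sat\Diamond R_i)>0$. What the phase-termination argument needs is a constant $\gamma_c>0$ with $\PR_{\sigma_i^X,\tau}(c\sat\Diamond R_i)\geq\gamma_c$ \emph{uniformly over all $\tau$}. Your phrase ``whence a uniform positive lower bound by finiteness'' only buys uniformity over the finitely many configurations of $F\cap W$; it does not address the fact that at the $n$-th visit to the attractor the residual behaviour of the fixed $\tau$ is a different shifted strategy depending on the history, and there are infinitely many of these. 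An infimum of positive numbers over infinitely many residual strategies can be $0$, and then the ``standard consequence of the finite-attractor property'' no longer applies. The paper's Lemma~\ref{lem-exists-gamma-c} is exactly the missing piece: an induction on the approximants of $H_i$ which, at each successor step, takes a minimum over Bob's \emph{finitely many} enabled moves so as to propagate a $\tau$-independent lower bound; the remark following it exhibits an arena with infinite choice for Bob in which every $\tau$ gives positive probability yet no uniform bound exists. Your induction could be upgraded to carry the constant $\gamma_c$ along, but as stated the key lemma is too weak and the gap is not cosmetic.

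A secondary point: you declare the finite-choice hypothesis ``essential'' in the completeness direction (the pigeonhole on $\Ena(c)$), but it is not needed there. Instead of negating membership in each approximant $\PxA(V,R_i\cup H_i^{k-1}(V))$, negate membership in the unfolded fixpoint $\PxA(V,R_i\cup H_i(V))$ directly: the negation of $\Pforall$ is an existential over moves and hands Bob a single witness move regardless of branching --- this is how Fact~\ref{fact:pxa}(2) and Lemma~\ref{lm:charac-Hi} proceed, with no finiteness assumption. So you have located the hypothesis in the wrong half of the proof while omitting it from the half where it is load-bearing. The remainder of your proposal is correct and takes a mildly different route from the paper: you prove soundness directly for $W$ (using $W=\HA(W)\subseteq H_i(W)$, so the case $W'\setminus H_i(W')$ handled by the paper's $\sigma_i$ disappears), you prove $W'\subseteq W$ by a purely lattice-theoretic unfolding rather than via $W'\subseteq\WinA^{=1}\subseteq W$, and your identity $H_1(X)=X\cap\mu Z.\PxA(X,R_1\cup Z)$ does hold (because $\PxA(X,Y)=\PxA(X,X\cap Y)$), giving an alternative to the paper's Equation~\eqref{eq-mu-nu-contractive}.
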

Before proving Theorem~\ref{th-correct-W},
let us explain how, in the case where $r=1$, one derives the
correctness of $W_1$ from the correctness of $W$. Setting $r=1$ in
Eq.~\eqref{eq-def-W}
yields $W = \nu X.\mu Z.X\cap
\PxA\bigl( X,R_1\cup Z\bigr)$. In this situation, we can use
Eq.~\eqref{eq-mu-nu-contractive}, a purely algebraic and
lattice-theoretical equality that holds for any monotonic binary $f$
(see Appendix for a proof):
\begin{gather}
\label{eq-mu-nu-contractive}
\tag{$\dagger$} \nu X.\mu Z.X\cap f(X,Z) = \nu X.\mu Z. f(X,Z) \:.
\end{gather}
Applying Eq.~\eqref{eq-mu-nu-contractive} on $W= \nu X.\mu Z.X\cap
\PxA\bigl( X,R_1\cup Z\bigr)$ yields
$W = \nu X.\mu Z. \PxA\bigl(X,R_1\cup Z\bigr)=W_1$.

Theorem~\ref{th-correct-W} provides two different characterizations of
the winning set $\WinA^{=1}\bigl(\bigwedge_{i=1}^r\Box\Diamond
R_i\bigr)$. Let us now prove its validity, in the general context of
finite-choice stochastic arenas with a finite attractor\footnote{We
  show later that the characterizations of the winning sets is
  \emph{not correct} if one does not assume the finite attractor
  property.}. The proof is divided in two parts: correctness of $W'$ in
Proposition~\ref{prop-W-correct}, completeness of $W$ in
Proposition~\ref{prop-compl-U}, and some purely lattice-theoretical reasoning
closing the loop in Lemma~\ref{lem-U=W}.

\subsection{Correctness for $W'$}

We prove that $W'$ only contains winning configurations for Alice by
exhibiting a strategy with which she ensures almost surely
$\bigwedge_i\Box\Diamond R_i$ when starting from some $c\in W'$. We
first define $r$ strategies $(\sigma_i)_{1\leq i\leq r}$, one for each
goal set $R_1,\ldots,R_r$, and prove their relevant properties. It
will then be easy to combine the $\sigma_i$'s in order to produce the
required strategy.

For $i=1,\ldots,r$, unfolding Eq.~\eqref{eq-def-Hi} yields $H_i(W') =
W' \cap \PxA(W',R_i \cup
H_i(W'))$. We let $\sigma_i$ be the memoryless $A$-strategy defined as
follows: for $c\in\ConfA\cap H_i(W')$, Alice picks an enabled move $m$
such that $\Post(c)[m] \subseteq W'$ and $\Post[m](c) \cap (R_i \cup
H_i(W')) \neq \emptyset$, which is possible by definition of $\PxA$,
while for $c\in\ConfA\cap W'\cap\neg H_i(W')$, Alice picks an enabled
move $m$ with $\Post(c)[m] \subseteq \HA(W')$, which is possible since
$W'=\PxA(\HA(W'),\Conf)$ by Eq.~\eqref{eq-def-W'}.

$\HA$ is contractive since the $H_i$'s are, hence $\HA(W')\subseteq W'$ and we deduce
that ``$\sigma_i$ stays in $W'$'':
\begin{gather}
\label{eq-sigmai-BoxW}
\forall c\in W':
\forall \tau:
\PR_{\sigma_i,\tau}(c\sat\Box W')=1
\:.
\end{gather}

\begin{lemma}
\label{lem-exists-gamma-c}
For all $c\in W'$ there exists some $\gamma_c>0$ such that
$\PR_{\sigma_i,\tau}(c \sat \Diamond R_i) \geq \gamma_c$ for all
$B$-strategies $\tau$.
\end{lemma}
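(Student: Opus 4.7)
The plan is to establish the positive-probability claim by exposing the rank structure of $H_i(W')$ as a least fixpoint, running an induction along the rank, and finally bridging $W'\setminus H_i(W')$ via a single step. Set $\phi(Z) = W'\cap\PxA(W',R_i\cup Z)$, so that $H_i(W') = \mu Z.\phi(Z)$. Under the finite-choice hypothesis on Bob, I would first check that $\phi$ is $\bigcup$-continuous in $Z$: the existential part $\Pexists(W',R_i\cup Z)$ is union-continuous because an existential witness lives in some stage of a directed union, and $\Pforall(W',R_i\cup Z)$ is union-continuous because for $c\in\ConfB$ the finite set $\Ena(c)$ allows picking a single stage of the directed union that works for all of $c$'s (finitely many) enabled moves. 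Hence the approximants $Z_0=\emptyset$ and $Z_{n+1}=\phi(Z_n)$ stabilize at $\omega$ and $H_i(W') = \bigcup_{n\in\Nat}Z_n$. Define $\mathrm{rank}(c)$ to be the least $n$ with $c\in Z_n$. I would then refine the freedom left open in the definition of $\sigma_i$ so that at $c\in\ConfA$ of rank $n+1$ the strategy picks a move $m$ witnessing $c\in\PxA(W',R_i\cup Z_n)$, i.e.\ with $\Post[m](c)\subseteq W'$ and $\Post[m](c)\cap(R_i\cup Z_n)\neq\emptyset$; note that this remains a legal $\sigma_i$-choice since $Z_n\subseteq H_i(W')$.

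The core step is a strong induction on $\mathrm{rank}(c)$ proving that for every $c\in H_i(W')$ there exists $\gamma_c>0$ such that $\PR_{\sigma_i,\tau}(c\sat\Diamond R_i)\geq\gamma_c$ for every Bob strategy $\tau$. At $c\in\ConfA$ of rank $n+1$, letting $m=\sigma_i(c)$ and $S_m=\Post[m](c)\cap(R_i\cup Z_n)\neq\emptyset$, a one-step decomposition gives
\[
\PR_{\sigma_i,\tau}(c\sat\Diamond R_i)\;\geq\;\sum_{c'\in S_m}\bfP(c,m)(c')\cdot\alpha_{c'}\:,
\]
with $\alpha_{c'}=1$ if $c'\in R_i$ and $\alpha_{c'}=\gamma_{c'}>0$ otherwise (induction hypothesis). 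Since $S_m$ is a nonempty at-most-countable set and every term is strictly positive, the sum is strictly positive. For $c\in\ConfB$ of rank $n+1$ the same $S_m$ is nonempty for every enabled $m$ by the $\Pforall$ clause of $c\in Z_{n+1}$, and I take $\gamma_c$ to be the minimum of the resulting sums over the finitely many enabled moves, which is again strictly positive thanks to Bob's finite-choice.

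Finally, for $c\in W'\setminus H_i(W')$ the unfolding $W'=\PxA(\HA(W'),\Conf)$ ensures that $\sigma_i$ (when $c\in\ConfA$) or every enabled move of Bob (when $c\in\ConfB$) sends $\Post[m](c)$ inside $\HA(W')\subseteq H_i(W')$; combining with the previous paragraph via the same countable-sum / minimum recipe yields a positive $\gamma_c$. The main obstacle of the proof is justifying $\omega$-stabilization of the approximants of $H_i(W')$: this is precisely the place where the finite-choice hypothesis on Bob is indispensable, and it is what licenses the rank induction in the first place. Everything else reduces to the routine countable-additivity observation that a nonempty at-most-countable sum of strictly positive reals is strictly positive.
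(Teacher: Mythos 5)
Your proof is correct and follows the same essential strategy as the paper's: an induction along the approximants of the least fixpoint $H_i(W')$, a one-step decomposition bounding $\PR_{\sigma_i,\tau}(c\sat\Diamond R_i)$ from below by a countable sum of strictly positive terms, a minimum over Bob's finitely many enabled moves, and a final one-step bridge for $c\in W'\setminus H_i(W')$ using $W'=\PxA(\HA(W'),\Conf)$. Two differences are worth noting. First, you spend effort proving that $Z\mapsto W'\cap\PxA(W',R_i\cup Z)$ is continuous for increasing chains so that the fixpoint closes at $\omega$ and you can induct on a natural-number rank; the paper instead runs a transfinite induction over ordinal-indexed approximants, whose limit case is trivial (any $c\in Z_\lambda$ already lies in some $Z_\beta$ with $\beta<\lambda$), so your continuity argument, while correct under the finite-choice hypothesis, buys nothing that the transfinite induction does not already give for free. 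Second, and to your credit, you make explicit a point the paper glosses over: at $c\in\ConfA$ of rank $n+1$ you commit $\sigma_i$ to a move witnessing $c\in\PxA(W',R_i\cup Z_n)$ rather than merely $c\in\PxA(W',R_i\cup H_i(W'))$, which is exactly what makes the induction hypothesis applicable to the successors. One caveat on your closing remark: the place where finite choice is genuinely indispensable is not the $\omega$-stabilization but the minimum over Bob's enabled moves --- the paper's own remark following the lemma (a configuration $c\in\ConfB$ with moves $m_k$ satisfying $\bfP(c,m_k)(r)=1/2^k$) shows the statement fails precisely because that infimum can be $0$ when Bob has infinitely many moves, independently of any convergence issue for the approximants; your own argument does use finite choice correctly in the minimum steps, so only the commentary, not the proof, is off.
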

\begin{proof}
  Here we use the finite-choice assumption.  First consider the case
  where $c\in H_i(W')$. Writing
  $(Z_\alpha)_{\alpha\in\Ord}$ for the approximants of $H_i(W')=\mu Z.W'\cap \PxA\bigl(W',R_i\cup Z\bigr)$,
  we
  prove, by induction on $\alpha$, that $\gamma_c>0$ exists when $c\in
  Z_\alpha$. The base case $\alpha=0$ holds vacuously since
  $Z_0=\emptyset$.  For $\alpha=\lambda$ (a limit ordinal),
  $Z_\lambda=\bigcup_{\beta<\lambda}Z_\beta$ so each $c\in Z_\lambda$
  is in some $Z_\beta$ and the induction hypothesis applies.

  Now to the successor case $\alpha=\beta+1$. Here
  $Z_\alpha=W'\cap\PxA\bigl(W',R_i\cup Z_{\beta}\bigr)$ and, given
  $\sigma_i$ and for any $\tau$, from $c\in Z_\alpha$ Alice or Bob
  will pick a move $m$ with $\Post[m](c) \cap (R_i\cup Z_{\beta}) \neq
  \emptyset$. The probability that after probabilistic environment's
  move the play will be in $R_i$ exactly at the next step is precisely
  $\gamma=\sum_{d\in R_i} \bfP(c,m)(d)$ and $\gamma>0$ if $\Post[m](c)
  \cap R_i\neq \emptyset$ (and only then). If $\gamma=0$ then
  $\Post[m](c) \cap R_i = \emptyset $ so that $\Post[m](c) \cap
  Z_\beta \neq \emptyset $. Then there is a positive probability
  $\gamma'=\sum_{d\in Z_\beta}\bfP(c,m)(d)$ that after probabilistic
  decision the play will be in $Z_{\beta}$ at the next step, hence (by
  induction hypothesis) a positive probability $\gamma''$ that it will
  be in $R_i$ later, with $\gamma''\geq \sum_{d\in
    Z_\beta}\gamma_{d}\cdot \bfP(c,m)(d)$. Note that for $d\in
  Z_\beta$, $\gamma_d$ does not depend on $\tau$ (by ind.\ hyp.) so
  that $\gamma$ and the lower bound for $\gamma''$ only slightly
  depend on $\tau$: they depend on what move $m$ is chosen by Bob if
  $c\in\ConfB$.  Now, since there are only finitely many moves enabled
  in $c$, we can pick a strictly positive value that is a lower bound
  for all the corresponding $\max(\gamma,\gamma'')$, proving the
  existence of $\gamma_c>0$ for $c\in Z_\alpha$.

  There remains the case where $c\in W'\cap\neg H_i(W')$: here
  $\sigma_i$ ensures that the play will be in $H_i(W')$ in the next
  step. If $c\in \ConfB$, we can let
  $\gamma_c\egdef \min_{m \in \Ena(c)}\sum_{d\in H_i(W')}\bfP(c,m)(d)
  \cdot \gamma_{d}$, which ensures $\gamma_c >0$ by the finite-choice
  assumption. In case $c \in \ConfA$, we simply define $\gamma_c\egdef
  \sum_{d\in H_i(W')}\bfP(c,m)(d) \cdot \gamma_{d}$ where $m$ is the
  move given by $\sigma_i$ when in configuration $c$. In both cases,
  we thus have $\gamma_c >0$, which concludes the proof.
\end{proof}
\begin{remark}[On the finite-choice assumption for $B$]
  Clearly enough, Lemma~\ref{lem-exists-gamma-c} does not hold if we
  relax the assumption that in every configuration of $\ConfB$, the
  set of enabled moves is finite. Indeed, consider a simple arena with
  three configurations $c$, $r$ and $s$, all belonging to player $B$,
  where $r$ and $s$ are sinks, and from $c$ there are countably many
  enabled moves $m_1, m_2 \cdots$ whose respective effect is defined
  by $\bfP(c,m_k)(r) = 1/2^k$ and $\bfP(c,m_k)(s) = 1- 1/2^k$. Letting
  $R = \{r\}$ and for the single B\"uchi objective $\Box\Diamond R$,
  we obtain $W' = \Conf$, and in particular $c \in W'$. Yet, there is
  no uniform lower bound $\gamma_c$ with $\PR_{\sigma_i,\tau}(c \sat
  \Diamond R_i) \geq \gamma_c$ for all $B$-strategies $\tau$.
\end{remark}

\begin{lemma}
\label{lem-sigmai-BoxDiamondRi=1}
$\PR_{\sigma_i,\tau}(c \sat \Box W' \land \Box\Diamond
R_i)=1$ for all $c\in W'$ and all $B$-strategies $\tau$.
\end{lemma}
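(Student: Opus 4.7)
The first conjunct $\Box W'$ is immediate from Eq.~(\ref{eq-sigmai-BoxW}), so the task reduces to proving $\PR_{\sigma_i,\tau}(c \sat \Box\Diamond R_i) = 1$ for every $c \in W'$ and every $B$-strategy $\tau$. The plan is to split the argument in two steps: first establish $\PR_{\sigma_i,\tau}(c \sat \Diamond R_i) = 1$ for all $c \in W'$ and all $\tau$, then iterate using memorylessness of $\sigma_i$ to upgrade this to $\Box\Diamond R_i$.

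\emph{Reaching $R_i$ almost surely.} Fix $c \in W'$ and $\tau$. By Eq.~(\ref{eq-sigmai-BoxW}) the run stays in $W'$ almost surely, and by the finite-attractor property $F$ is visited infinitely often almost surely, hence so is $F\cap W'$. Let $T_1<T_2<\cdots$ enumerate the successive visit times to $F\cap W'$, all almost surely finite. Because $F\cap W'$ is finite, $\gamma_{\min} \egdef \min_{d \in F \cap W'}\gamma_d$ is strictly positive, where $\gamma_d>0$ is provided by Lemma~\ref{lem-exists-gamma-c}. Memorylessness of $\sigma_i$ combined with the strong Markov property means that, conditional on the history up to $T_k$, the future evolution is an instance of the same game starting from $c_{T_k}\in F\cap W'$ against some continuation $B$-strategy; applying Lemma~\ref{lem-exists-gamma-c} at $c_{T_k}$ therefore yields
\begin{equation*}
\PR_{\sigma_i,\tau}\bigl(\exists n\geq T_k: c_n\in R_i \bigm| \text{history up to } T_k \bigr) \ \geq\ \gamma_{\min} \qquad \text{almost surely.}
\end{equation*}
Writing $U_k$ for the (history-measurable) event ``$c_n\notin R_i$ for all $n<T_k$'', the identity $c\sat\Box\neg R_i = U_k \cap \{c_n\notin R_i \text{ for all } n\geq T_k\}$ and the tower property combine to give $\PR(c\sat\Box\neg R_i) \leq (1-\gamma_{\min})\PR(U_k)$. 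Since $U_k$ decreases to $\{c\sat\Box\neg R_i\}$ as $k\to\infty$ (using $T_k\to\infty$ almost surely), the inequality passes in the limit to $\PR(\Box\neg R_i) \leq (1-\gamma_{\min})\PR(\Box\neg R_i)$, which forces $\PR(\Box\neg R_i)=0$ because $\gamma_{\min}>0$.

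\emph{Iterating, and the main obstacle.} After the first $R_i$-visit, almost sure by the preceding step, the run sits in some $c'\in W'$ by Eq.~(\ref{eq-sigmai-BoxW}); memorylessness of $\sigma_i$ and the strong Markov property let us re-apply the first step starting from $c'$, so a second $R_i$-visit occurs almost surely, and a straightforward induction delivers that $R_i$ is visited at least $k$ times almost surely for every $k$, hence infinitely often. The main subtlety is that $\tau$ may carry unbounded memory, so the joint process $(c_n)$ is not a Markov chain on $\Conf$ alone; what rescues the argument is that Lemma~\ref{lem-exists-gamma-c} delivers a bound $\gamma_d$ uniform over all $B$-strategies, so it applies to the (arbitrary) continuation of $\tau$ past each $T_k$, and the finiteness of $F\cap W'$ then collapses these pointwise bounds into the single constant $\gamma_{\min}$ driving the contraction.
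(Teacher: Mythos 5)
Your proof is correct and takes essentially the same route as the paper's: confine the play to $W'$ via Eq.~(\ref{eq-sigmai-BoxW}), use the finite attractor to visit $F\cap W'$ infinitely often, exploit the finiteness of $F\cap W'$ to turn the pointwise bounds of Lemma~\ref{lem-exists-gamma-c} into a uniform $\gamma_{\min}>0$, and close with the contraction argument that the paper compresses into ``standard reasoning on recurrent sets.'' The only cosmetic nit is in your iteration step: restart one step \emph{after} each $R_i$-visit (the next configuration is still in $W'$ almost surely), since $\Diamond R_i$ is trivially satisfied at the visit itself and would not by itself yield a new visit.
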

\begin{proof}
  This is where we use the finite-attractor property: there is a
  finite set $F \subseteq \Conf$ such that $\PR_{\sigma,\tau} (c \sat
  \Box \Diamond F)=1$ for any $c\in\Conf$ and strategies $\sigma$ and
  $\tau$. In particular, for $\sigma_i$ and using
  Eq.~\eqref{eq-sigmai-BoxW}, we deduce $\PR_{\sigma_i,\tau} (c \sat
  \Box W'\land\Box \Diamond F)=1$ for any $c\in W'$ and any strategy
  $\tau$ (entailing $F\cap W'\not=\emptyset$). Let now $\gamma\egdef
  \min\{ \gamma_f ~|~ f\in F\cap W'\}$ so that for any $f\in F\cap W'$
  and any $B$-strategy $\tau$, Lemma~\ref{lem-exists-gamma-c} gives
  $\PR_{\sigma_i,\tau}(f \sat \Diamond R_i) \geq \gamma$. Note that
  $\gamma>0$ since $F\cap W'$ is finite. Since from $F \cap W'$ and
  applying $\sigma_i$, the probability to eventually reach $R_i$ is
  lower bounded by $\gamma$, and since $\PR_{\sigma_i,\tau} \bigl(c
  \sat \Box\Diamond (F\cap W')\bigr)=1$, we deduce that
  $\PR_{\sigma_i,\tau} (c \sat \Box\Diamond R_i)=1$ by standard
  reasoning on recurrent sets.
\end{proof}

\begin{remark}[On the finite-attractor assumption]
Lemma~\ref{lem-sigmai-BoxDiamondRi=1} crucially relies on the
finite-attractor
property. Indeed, consider the random walk on the set of
naturals where from any state $n>0$ the probability is $\frac{3}{4}$
to move to $n+1$ and $\frac{1}{4}$ to move to $n-1$ (and where state
$0$ is a sink where one stays forever). It is a well-known result
that, starting from any $n >0$, the probability is strictly less than
$1$ to visit state $0$ ---and, in fact, any finite set of states---
infinitely often. This random walk however can be seen as a stochastic
game (with a single player and a single move in each state) for which,
and taking $R_1 = \{0\}$, $W_1$ consists of the whole  states set
(indeed,  $W_1=\Pre^*(R_1)$ for  single-player single-choice arenas).
This provides a simple example showing that the
finite-attractor
property is required for
Lemma~\ref{lem-sigmai-BoxDiamondRi=1}, and for
Theorem~\ref{th-correct-W}, to hold.
\qed
\end{remark}

\begin{proposition}[Correctness of $W'$]
\label{prop-W-correct}
$W' \subseteq \WinA^{=1}\bigl(\bigwedge_{i=1}^r \Box\Diamond R_i\bigr)$.
\end{proposition}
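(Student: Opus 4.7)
The plan is to combine the strategies $\sigma_1,\ldots,\sigma_r$ defined in the previous subsection into a single finite-memory strategy $\sigma_W$ that cycles through the $r$ goals in round-robin fashion. Concretely, I would equip $\sigma_W$ with a memory taking values in $\{1,\ldots,r\}$, initialised to~$1$: while the memory is in mode $i$, $\sigma_W$ plays according to $\sigma_i$; as soon as the current configuration belongs to $R_i$, the memory is updated to $(i\bmod r)+1$. It then suffices to show that, for every $c\in W'$ and every $B$-strategy $\tau$, $\PR_{\sigma_W,\tau}\bigl(c\sat\bigwedge_{i=1}^{r}\Box\Diamond R_i\bigr)=1$.

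The first step is to check that the play almost surely stays in $W'$. By Eq.~\eqref{eq-sigmai-BoxW} each $\sigma_i$ ensures $\PR_{\sigma_i,\tau'}(c'\sat\Box W')=1$ from every $c'\in W'$ and every $B$-strategy $\tau'$; an easy induction on the number of mode switches (each switch being triggered at a configuration that the preceding phase kept in $W'$ almost surely) then yields $\PR_{\sigma_W,\tau}(c\sat\Box W')=1$.

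The second step is to show that the cycling never gets stuck, so that infinitely many phases occur. Suppose the play enters phase $i$ at some finite history $\rho$ ending in a configuration $c'\in W'$. During this phase the continuation of the run is governed by $\sigma_i$ and by the residual $B$-strategy $\tau_\rho$ derived from $\tau$. By Lemma~\ref{lem-sigmai-BoxDiamondRi=1}, $\PR_{\sigma_i,\tau_\rho}(c'\sat\Box\Diamond R_i)=1$, hence in particular $R_i$ is reached almost surely, and the mode is then switched to $(i\bmod r)+1$. Iterating this argument shows that almost surely every mode $i\in\{1,\ldots,r\}$ is entered infinitely often and each corresponding phase ends with a visit to $R_i$. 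Consequently every $R_i$ is visited infinitely often with probability one under $(\sigma_W,\tau)$, establishing $c\in\WinA^{=1}\bigl(\bigwedge_{i=1}^{r}\Box\Diamond R_i\bigr)$.

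The main subtlety, though essentially technical, is the bookkeeping across the infinite sequence of phase switches: one has to apply Lemma~\ref{lem-sigmai-BoxDiamondRi=1} at each phase even though Bob's residual strategy $\tau_\rho$ depends on the \emph{entire} past. This causes no difficulty here because the lemma holds \emph{uniformly} against all $B$-strategies, so the standard strong-Markov argument lets one concatenate the almost-sure guarantees obtained on each successive phase into the desired almost-sure guarantee on the whole infinite run.
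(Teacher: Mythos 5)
Your proof is correct and follows essentially the same route as the paper's: a finite-memory round-robin combination of the $\sigma_i$'s, with each phase guaranteed to end almost surely by Lemma~\ref{lem-sigmai-BoxDiamondRi=1} and the play confined to $W'$ throughout. Your explicit remarks on the mode-switch bookkeeping and the uniformity of the lemma over all $B$-strategies only make explicit what the paper leaves implicit.
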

\begin{proof}
By combining the strategies $\sigma_i$'s, we define a finite-memory strategy
$\sigma_W$ that guarantees $\PR_{\sigma_W,\tau}\bigl(c\sat\bigwedge_i
\Box\Diamond R_i\bigr)=1$ for any $c\in W'$ and against any $B$-strategy
$\tau$.

More precisely, $\sigma_W$ has $r$ modes: $1,2,\ldots,r$. In mode $i$,
$\sigma_W$ behaves like $\sigma_i$ until $R_i$ is reached, which is bound to
eventually happen with probability 1 by
Lemma~\ref{lem-sigmai-BoxDiamondRi=1}. Note that the play remains
constantly in $W'$. Once $R_i$ has been reached, $\sigma_W$ switches to mode
$i+1(\text{mod }r)$, playing at least one move. This is repeated in a
neverending cycle, ensuring $\Box\Diamond R_i$ with probability 1.
\end{proof}

\begin{remark}[On randomized memoryless strategies]
It is known that, if one considers mixed, aka randomized, strategies,
generalized B\"uchi objectives on $\G$ admit memoryless, aka
deterministic, winning strategies~\cite{brazdil2013}. Note that our
$\sigma_W$ is finite-memory (and not randomized). It is a natural
question whether a simple randomized memoryless strategy like ``at
each step, choose randomly and uniformly between following $\sigma_1,
\ldots, \sigma_r$'' is almost-surely winning for
$\bigwedge_{i=1}^r\Box\Diamond R_i$.
\end{remark}

\subsection{Completeness of $W$}

In order to prove that $W$ 
contains the winning set for Alice, we show that $\WinA^{=1}
\bigwedge_i\Box\Diamond R_i$ is a post-fixpoint of $\HA$, thus
necessarily included in its greatest fixpoint $W$. We start with the
following lemma:
\begin{lemma}
\label{lm:charac-Hi}
\(H_i(X) \supseteq \bigl\{c ~\big|~ \exists \sigma\ \forall \tau,\
\PR_{\sigma,\tau}(c \sat \Box X)=1 \textrm{ and }
 \PR_{\sigma,\tau}(c \sat \Next\Diamond R_i) >0\bigr\}\:.\)
\end{lemma}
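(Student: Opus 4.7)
The plan is to prove this inclusion by contraposition: every $c \notin H_i(X)$ will be shown to fail the right-hand side condition. Let $V \egdef \Conf \setminus H_i(X)$. Since $H_i(X)$ is the least fixpoint of $Z \mapsto X \cap \PxA(X, R_i \cup Z)$, its complement $V$ is the greatest fixpoint of the dual operator, and therefore satisfies
\[
V \;=\; \neg X \;\cup\; \neg\PxA\bigl(X,\, R_i \cup H_i(X)\bigr).
\]
Unpacking the definitions of $\Pexists$ and $\Pforall$ yields the unfolding I shall use throughout: for every $c \in V \cap X$, \emph{every} move $m$ enabled at $c$ (if $c \in \ConfA$) or \emph{some} move $m$ enabled at $c$ (if $c \in \ConfB$) satisfies either $\Post[m](c) \not\subseteq X$ or $\Post[m](c) \subseteq V \cap \neg R_i$.

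Fix $c \in V$ and let $\sigma$ be an arbitrary $A$-strategy. I would exhibit a memoryless $B$-strategy $\tau^*$ (depending only on the arena, not on $\sigma$) witnessing that Alice cannot simultaneously ensure $\Box X$ almost surely and $\Next\Diamond R_i$ with positive probability from $c$. At each $c' \in V \cap X \cap \ConfB$, $\tau^*$ is to pick an enabled $m$ with $\Post[m](c') \subseteq V \cap \neg R_i$ whenever such an $m$ exists, and otherwise an $m$ with $\Post[m](c') \not\subseteq X$ (such a move is guaranteed by the unfolding). Elsewhere $\tau^*$ may be chosen arbitrarily.

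The case $c \notin X$ is immediate. Otherwise, assuming $\PR_{\sigma,\tau^*}(c \sat \Box X) = 1$, the goal is to derive $\PR_{\sigma,\tau^*}(c \sat \Next\Diamond R_i) = 0$. This will follow from a straightforward induction on $n$ showing that for every positive-probability $(\sigma, \tau^*)$-compatible prefix $c_0 c_1 \cdots c_n$ with all $c_i \in X$, one has $c_i \in V \cap X$ for every $i \leq n$ and $c_i \notin R_i$ for every $1 \leq i \leq n$. The crux of the induction step, and the main obstacle, is this: the move $m$ played at $c_n \in V \cap X$ (by $\sigma$ if $c_n \in \ConfA$, by $\tau^*$ otherwise) satisfies one of the two alternatives of the unfolding; the option $\Post[m](c_n) \not\subseteq X$ would create a positive-probability one-step extension of the history into $\neg X$, contradicting $\PR_{\sigma,\tau^*}(c \sat \Box X) = 1$, so only the alternative $\Post[m](c_n) \subseteq V \cap \neg R_i$ can occur on positive-probability histories. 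Consequently $R_i$ is almost surely not visited after the initial step, giving $\PR_{\sigma,\tau^*}(c \sat \Next\Diamond R_i) = 0$ and completing the contrapositive.
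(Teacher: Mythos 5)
Your proof is correct and takes essentially the same route as the paper's: both argue by contraposition, unfold $\neg H_i(X) = \neg X \cup \neg\PxA\bigl(X, R_i\cup H_i(X)\bigr)$, and construct a single memoryless $B$-strategy, independent of $\sigma$, that from $\neg H_i(X)$ forces either a positive probability of eventually leaving $X$ or almost-sure avoidance of $R_i$ after the first step. The only cosmetic difference is that the paper obtains Bob's local choices by invoking Fact~\ref{fact:pxa}, whereas you unpack the definitions of $\Pexists$ and $\Pforall$ directly and spell out the induction on prefixes that the paper leaves implicit.
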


\begin{proof}
We actually prove a stronger claim: we show that there exists a memoryless
$B$-strategy $\tau$  such that,  for every $c \notin H_i(X)$ and every $A$-strategy $\sigma$, either $\PR_{\sigma,\tau}(c\models \Diamond \neg X) >0$, or
$\PR_{\sigma,\tau}(c\models \Next\Box \neg R_i)=1$.

Let $c \notin H_i(X)$. By definition $\neg H_i(X) = \neg X \cup \neg
\PxA(X,R_i \cup H_i(X))$. If $c \notin X$, then trivially
$\PR_{\sigma,\tau}(c \models \Diamond \neg X) >0$ for any
$(\sigma,\tau)$ so we do not care how $\tau$ is defined here. Consider
now $c \notin \PxA(X,R_i \cup H_i(X))$. By Fact~\ref{fact:pxa}, Bob
has a (memoryless) strategy $\tau_c$ such that against any $A$-strategy
$\sigma$, $\PR_{\sigma,\tau_c}(c \models \Next X)<1$ or
$\PR_{\sigma,\tau_c}\bigl(c \models \Next (R_i \cup H_i(X))\bigr)=0$,
which can be reformulated as $\PR_{\sigma,\tau_c}(c \models \Next \neg
X)>0$ or $\PR_{\sigma,\tau_c}\bigl(c \models \Next (\neg R_i \cap \neg
H_i(X))\bigr)=1$. For $c\in\ConfB$, we define $\tau(c)$ as the move
given by $\tau_c(c)$. The resulting strategy $\tau$ guarantees,
starting from $\neg H_i(X)$, that the game will either always stay in $\neg
R_i\cap\neg H_i(X)$ (after the 1st step) or has a positive probability
of visiting $\neg X$ eventually.
\end{proof}

\begin{proposition}[Completeness of $W$]
\label{prop-compl-U}
$\WinA^{=1} \bigl(\bigwedge_{i=1}^r\Box\Diamond R_i\bigr)\subseteq W$.
\end{proposition}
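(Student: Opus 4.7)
The plan is to prove $U \subseteq W$ by coinduction, where I write $U \mathrel{\egdef} \WinA^{=1}\bigl(\bigwedge_{i=1}^r \Box\Diamond R_i\bigr)$. Since $W = \nu X.\HA(X)$ is a greatest fixpoint, by Knaster--Tarski it coincides with the union of all post-fixpoints of $\HA$, so it is enough to verify $U \subseteq \HA(U) = \bigcap_{i=1}^r H_i(U)$. Fix $c \in U$ and $i \in \{1, \ldots, r\}$, and let $\sigma_c$ denote any almost-surely winning $A$-strategy from $c$. Thanks to Lemma~\ref{lm:charac-Hi}, it suffices to exhibit, for each $i$, a strategy (the same $\sigma_c$ will serve) such that for every $B$-strategy $\tau$ both $\PR_{\sigma_c,\tau}(c \sat \Box U) = 1$ and $\PR_{\sigma_c,\tau}(c \sat \Next\Diamond R_i) > 0$.

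The second condition is essentially free: almost every run from $c$ under $(\sigma_c, \tau)$ satisfies $\Box\Diamond R_i$, so it visits $R_i$ at some step $\geq 1$, yielding $\PR_{\sigma_c,\tau}(c\sat\Next\Diamond R_i) = 1 > 0$.

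The first condition is the real work and I would establish it by contradiction. Assume that for some $\tau_0$ a finite prefix $\rho = c c_1 \cdots c_n$, compatible with $(\sigma_c, \tau_0)$ and of positive probability $p$, ends in some $c_n \notin U$. Since $c_n \notin U$, \emph{no} $A$-strategy from $c_n$ is almost-surely winning; in particular the residual strategy $\sigma_c^\rho$ defined by $\sigma_c^\rho(c_n w) \mathrel{\egdef} \sigma_c(\rho \cdot w)$ is not. Hence there exists a $B$-strategy $\tau'$ from $c_n$ with $\PR_{\sigma_c^\rho,\tau'}(c_n \sat \neg\bigwedge_i \Box\Diamond R_i) > 0$. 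I would then splice $\tau_0$ and $\tau'$ into a single $B$-strategy $\tilde{\tau}$ that agrees with $\tau_0$ on strict prefixes of $\rho$ and mimics $\tau'$ after $\rho$ has been traversed. The strong Markov property of $\G_{\sigma_c,\tilde{\tau}}$ then yields
\[
\PR_{\sigma_c,\tilde{\tau}}\bigl(c \sat \neg{\textstyle\bigwedge_i} \Box\Diamond R_i\bigr) \ \geq \ p \cdot \PR_{\sigma_c^\rho,\tau'}\bigl(c_n \sat \neg{\textstyle\bigwedge_i} \Box\Diamond R_i\bigr) \ > \ 0,
\]
contradicting that $\sigma_c$ almost-surely wins from $c$. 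Ranging over the countably many finite prefixes ending outside $U$, this rules out any such visit with positive probability, so $\PR_{\sigma_c,\tau}(c \sat \Box U) = 1$.

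The main obstacle is the strategy-splicing argument just sketched: constructing $\tilde{\tau}$ and invoking the Markov property to compose probabilities across $\rho$ requires some measure-theoretic bookkeeping, even though the pattern is standard. Note that neither the finite-attractor nor the finite-choice hypothesis is needed for completeness; those assumptions only played a role in the correctness direction for $W'$.
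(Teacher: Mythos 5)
Your proof is correct, and its skeleton is the same as the paper's: exhibit a post-fixpoint of $\HA$ containing $c$ via Lemma~\ref{lm:charac-Hi}, then invoke Knaster--Tarski. The one genuine difference is the choice of post-fixpoint. You take the whole winning set $U=\WinA^{=1}\bigl(\bigwedge_i\Box\Diamond R_i\bigr)$ and must therefore prove almost-sure invariance of $U$ under a winning strategy, which is where your splicing/contradiction argument does its work; the paper instead takes $E$, the set of configurations reachable with positive probability under one fixed winning strategy $\sigma$, for which $\PR_{\sigma',\tau}(d\sat\Box E)=1$ holds by construction. The paper does not thereby escape the residual-strategy fact --- it reappears, compressed into the phrase ``by assumption'', when claiming that the suffix strategy $\sigma'$ is still almost-surely winning from each $d\in E$ --- so both proofs ultimately rest on the same splicing-plus-tail-event argument; yours makes it explicit where the paper leaves it implicit. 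Your closing observation that neither the finite-attractor nor the finite-choice hypothesis is needed for this inclusion also matches the paper, where those assumptions enter only through Lemmas~\ref{lem-exists-gamma-c} and~\ref{lem-sigmai-BoxDiamondRi=1} in the correctness direction.
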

\begin{proof}
  Let $c \in \WinA^{=1} \bigwedge_i \Box\Diamond R_i$, and $\sigma$ be
  a strategy ensuring $\bigwedge_i\Box\Diamond R_i$ with probability 1
  from $c$. Consider $E = \bigl\{d \in \Conf \mid \exists \tau :
  \PR_{\sigma,\tau}(c \sat \Diamond d)>0\bigr\}$, i.e., the set of
  configurations that can be visited under strategy
  $\sigma$. Obviously $c\in E$. Furthermore, for any $d\in E$ and any
  $B$-strategy $\tau$, $\PR_{\sigma',\tau}(d\sat\Box E)=1$ holds,
  where $\sigma'$ is a ``suffix strategy'' of $\sigma$ after $d$ is
  visited, that is, $\sigma'$ behaves from $d$ like $\sigma$ would
  after some prefix ending in $d$. Since furthermore
  $\PR_{\sigma',\tau}\bigl(d\sat\bigwedge_i\Box\Diamond R_i\bigr)=1$
  by assumption, we deduce in particular
  $\PR_{\sigma',\tau}\bigl(d\sat\Next\Diamond R_i\bigr)=1$ for any
  $i=1,\ldots,r$. Hence $E\subseteq H_i(E)$ for any $i$ by
  Lemma~\ref{lm:charac-Hi}, and thus $E\subseteq \HA(E)$. Finally $E$
  is a post-fixpoint of $\HA$, and is thus included in its greatest
  fixpoint.  We conclude that $c\in\nu X.\HA(X)= W$.
\end{proof}
The loop is closed, and Theorem~\ref{th-correct-W} proven, with the
following lattice-theoretical reasoning:
\begin{lemma}
\label{lem-U=W}
$W \subseteq W'$.
\end{lemma}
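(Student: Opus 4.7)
The plan is to exhibit $W$ as a post-fixpoint of the monotonic operator $F: X \mapsto \PxA(\HA(X), \Conf)$ whose greatest fixpoint is $W'$, and then invoke the post-fixpoint characterization of $\nu$ recalled in Section 3.1 (``$\phi(U)\supseteq U$ implies $\nu X.\phi(X)\supseteq U$'') to conclude $W\subseteq W'$. Everything below is purely lattice-theoretic; no probabilistic reasoning, no finite-attractor and no finite-choice hypothesis are needed.

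First I would unfold the outer greatest fixpoint in the definition of $W$: since $W = \nu X.\HA(X)$ is itself a fixpoint of $\HA$, we have $W = \HA(W) = \bigcap_{i=1}^r H_i(W)$. In particular $W\subseteq H_i(W)$ for every $i=1,\ldots,r$.

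Next I would unfold the inner least fixpoint defining $H_i$: since $H_i(W) = \mu Z.\,W\cap \PxA(W,R_i\cup Z)$ is a fixpoint of $Z\mapsto W\cap \PxA(W,R_i\cup Z)$, we get
\[
H_i(W) \;=\; W\cap \PxA\bigl(W,R_i\cup H_i(W)\bigr) \;\subseteq\; \PxA\bigl(W,R_i\cup H_i(W)\bigr).
\]
Monotonicity of $\PxA$ in its second argument then weakens the target set to $\Conf$, yielding $H_i(W)\subseteq \PxA(W,\Conf)$.

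Chaining the two inclusions for, say, $i=1$ gives $W\subseteq H_1(W)\subseteq \PxA(W,\Conf)$. Using $W=\HA(W)$ one last time to rewrite the first argument, this reads $W\subseteq \PxA(\HA(W),\Conf) = F(W)$, so $W$ is a post-fixpoint of $F$ and hence $W\subseteq \nu X.F(X)=W'$. There is no genuine obstacle here: the only mildly delicate point is not to confuse the two nested unfoldings (the outer $\nu$ yielding $W=\HA(W)$ and the inner $\mu$ yielding the fixpoint equality for $H_i(W)$); once these are written down, monotonicity of $\PxA$ closes the argument in one line.
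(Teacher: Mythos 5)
Your argument is correct and is essentially the paper's own proof: both unfold $W=\HA(W)$ and the inner fixpoint equality $H_i(W)=W\cap\PxA(W,R_i\cup H_i(W))$, use monotonicity of $\PxA$ in its second argument to reach $\PxA(W,\Conf)=\PxA(\HA(W),\Conf)$, and conclude that $W$ is a post-fixpoint of $X\mapsto\PxA(\HA(X),\Conf)$, hence contained in $W'$.
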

\begin{proof}
  Recall that $W = \nu X. \HA(X)$, so that $W=\HA(W)$. Similarly,
  using Eq.~\eqref{eq-def-Hi}, we deduce $\HA(W)=\bigcap_i H_i(W)=
  \bigcap_i \bigl(W\cap\PxA(W,R_i \cup H_i(W))\bigr)$, hence
  $\HA(W)\subseteq\PxA(W,\Conf)$ by monotonicity of $\PxA$ in its
  second argument. Combining these two points gives
  $W=\HA(W)\subseteq\PxA(\HA(W),\Conf)$, hence $W$ is a post-fixpoint
  of $X \mapsto \PxA(\HA(X),\Conf)$ and is included in its greatest
  fixpoint $W'$.
\end{proof}



\section{Stochastic games on lossy channel systems}
\label{sec-lcs}

Theorem~\ref{th-correct-W} entails the decidability of
generalized B\"uchi games on channel systems with probabilistic message
losses, or PLCSs. This is obtained by applying a generic and powerful
``finite-time convergence theorem'' for fixpoints defined on WQO's.

\subsection{Channel systems with guards}

A \emph{channel system} 
is a tuple $S= (Q,\Ch,\Mess,\Delta)$ consisting of a finite set
$Q=\{q,q',\ldots\}$ of \emph{locations}, a finite set
$\Ch=\{\ttc_1,\ldots,\ttc_d\}$ of \emph{channels}, a finite
\emph{message alphabet} $\Mess=\{a,b,\ldots\}$ and a finite set
$\Delta=\{\delta,\ldots\}$ of \emph{transition rules}.  Each
transition rule has the form $(q,g,\op,q')$, written $q \step{g,\op}
q'$, where $g$ is a \emph{guard} (see below), and $\op$ is an
\emph{operation} of one of the following three forms: $\ttc!a$
(sending message $a \in \Mess$ along channel $\ttc \in \Ch$), $\ttc?a$
(receiving message $a$ from channel $\ttc$), or $\surd$ (an internal
action with no I/O-operation).

Let $S$ be a channel system as above. A \emph{configuration} of $S$ is
a pair $c=(q,w)$ where $q$ is a location of $S$ and $w : \Ch \to
\Mess^*$ is a mapping, that describes the current channel contents:
we let $\ConfS\egdef Q \times {\Mess^*}^\Ch$.

A \emph{guard} is a predicate on channel contents used to constrain
the firability of rules. In this paper, a guard is a tuple
$g=(L_1,\ldots,L_d)\in\Reg(\Mess)^{\sizeof{\Ch}}$ of regular
languages, one for each channel.
For a configuration
$c=(q,w_1,\ldots,w_d)$, we write $c\sat g$, and say that \emph{$c$
  respects $g$}, when $w_i\in L_i$ for all $i=1,\ldots,d$.

Rules give rise to transitions in the operational
semantics. Let $\delta=(q_1,g,\op,q_2)$ be a rule in $\Delta$ and let
$c=(q,w)$, $c'=(q',w')$ be two configurations of $S$. We write $c
\step{\delta} c'$, and say that $\delta$ is enabled in $c$, if $q=q_1$,
$q'=q_2$, $c\sat g$, and $w'$ is the valuation obtained from $w$ by
applying $\op$. Formally $w' = w$ if $\op = \surd$, and otherwise if
$\op=\ttc_i!a$ (resp. if $\op=\ttc_i?a$) then $w'_i=w_i.a$
(resp. $a.w'_i=w_i$) and $w'_j=w_j$ for all $j\not=i$.

For simplicity, we assume in the rest of the paper that $S$ denotes a
fixed channel system $S=(Q,\Ch,\Mess,\Delta)$  that  has
no deadlock configurations, i.e., every $c\in\ConfS$ has an enabled
rule: this is no loss of generality since it is easy ---\emph{when guards are
allowed}--- to add rules going to a new sink location exactly in
configurations where none of the original rules is enabled.

\begin{remark}[About guards in channel systems]
Allowing guards in transition rules is useful (e.g., for expressing
priorities) but departs from the standard models of channel
systems~\cite{muscholl2010}.  Indeed, testing the
whole contents of a fifo channel is not a realistic feature  when modeling
distributed asynchronous systems. However, (unreliable) channel systems are now seen
more broadly as a fundamental computational model closely related to
Post's \emph{tag systems} and with algorithmic
applications beyond distributed protocols:
see, e.g.,~\cite{konev2005,abdulla-icalp05,BFL-lics2012,bouyer2008}.
In such settings,  simple guards have been considered and proved useful:
see, e.g.,~\cite{bouyer2008,BMOSW-stacs08,JKS-tcs2012}.

Using additional control states and messages, it is sometimes possible to simulate guards in (lossy) channel
systems. We note that the known
simulations  preserve nondeterministic reachability but usually not
game-theoretical properties in stochastic environments.
\qed
\end{remark}

\subsection{Probabilistic message losses}
\label{sec:plosses}

PLCSs are channel systems where messages can be lost (following some
probabilistic model) while they are in the
channels~\cite{purush97,baier99,ABRS-icomp,abdulla2005,rabinovich2006}.
In this paper, we consider two kinds of unreliability caused by a
stochastic environment: message losses on one hand, and combinations of
message losses and duplications on the other hand.

Message losses are traditionally modeled via the subword relation: given
two words $u,v\in\Mess^*$, we write $u\subword v$ when $u$ is a
\emph{subword}, i.e., a scattered subsequence, of $v$. For two
configurations $c=(q,w)$ and $c'=(q',w')$, we let $c\subword c'$
$\equivdef$ $\bigl(q=q'$ and $w_i\subword w'_i$ for all $i=1,\ldots,d\bigr)$. In
other words, $c\subword c'$ when $c$ is the result of removing some
messages (possible none) at arbitrary places in the channel contents for
$c'$.

Message duplications are modeled by a rational transduction
$\Tdup\subseteq\Mess^*\times\Mess^*$ over sequences of messages, where
every single message $a\in\Mess$ is replaced by either $a$ or $a a$. We
write $u\preceq_\dup v$ when $(u,v)\in \Tdup$ (\emph{e.g.} $ab
\preceq_\dup aab$) 
and we extend to configurations with $(q,w)\preceq_\dup
(q',w')\equivdef$ $\bigl(q=q'$ and $w_i\preceq_\dup w'_i$ for all
$i=1,\ldots,d\bigr)$.

For PLCSs with only message losses, we write $c\leadsto c'$ when
$c\supword c'$ ($\equivdef c'\subword c$). For PLCSs with losses and
duplications, $c\leadsto c'$ means that $c\preceq_\dup c''\supword c'$ for
some $c''$.

In PLCSs, message perturbations are probabilistic events. Formally,
we associate a distribution $D_\env(c)\in \Dist(\ConfS)$ with every
configuration $c\in\ConfS$ and we say that ``$D_\env(c,c')$ is the
probability that $c$ becomes $c'$ by message losses and duplications
(in one step)''. Given $D_\env$ and a partition $\ConfS = \ConfA
\sqcup \ConfB$, the channel system $S$ with probabilistic losses
defines a stochastic arena $\G_S = (\ConfS,\Delta,\bfP)$ where the
moves available to the players are exactly the rules of $S$ ---thus $\G_S$
is finite-choice---, and the probabilistic transition function $\bfP$ is
formalized by: for every $c \in \ConfS$ and $\delta$ enabled in $c$,
$\bfP(c,\delta) \egdef D_\env(c')$ where $c \step{\delta} c'$.

The qualitative properties that  we are interested in
 do not depend on the exact choices made for $D_\env$. In this paper,   we only
require that  $D_\env$ is \emph{well-behaved}, i.e., satisfies the
following two properties:
\begin{description}
\item[Compatibility with nondeterministic semantics:]
$D_\env(c)(c')>0$ iff $c\leadsto c'$.
\item[Finite attractor:]
Some finite set $F\subseteq\ConfS$ is visited infinitely often
with probability one.
\end{description}
A now standard choice for $D_\env$ in PLCSs models message losses (and
duplications) as independent events. One assumes that at every step, each
individual message can be lost with a fixed probability $\lambda \in
(0,1)$, duplicated with a fixed probability $\lambda'\in[0,1)$ (and remains
  unperturbed with probability $1-\lambda-\lambda'$
). This is the so-called \emph{local-fault} model
  from~\cite{ABRS-icomp,rabinovich2006,Sch-voss}, and it gives rise to a
  well-behaved $D_\env$ when only message losses are considered, i.e., when
  $\lambda'=0$, or when losses are more probable than duplications, i.e.,
  when $0<\lambda'<\lambda$.
  In particular, the set $F_0 \egdef \{ (q,\epsilon,\ldots,\epsilon)
  ~|~ q\in Q\}$ of configurations with empty channels is a finite
  attractor in $\G_S$.
  The interested reader can find in \cite[sections~5\&6]{ABRS-icomp}
  some detailed computations of $D_\env(c)(c')$ in the local-fault
  model, but s/he must be warned that the qualitative outcomes on
  PLCSs do not depend on these values as long as $D_\env$ is
  well-behaved.

\subsection{Regular model-checking of channel systems}

Regular model-checking~\cite{bouajjani2000b,kesten2001} is a symbolic
verification technique where one
computes infinite but regular sets of configurations using representations
from automata theory or from constraint solving.

\begin{definition}
A \emph{(regular) region} of $S$ is a set
$R\subseteq\ConfS$ of configurations that can be written under the form $R
=\bigcup_{i\in I} \{q_i\}\times L_i^1\times \cdots \times L_i^d$ with
 a \emph{finite} index set $I$, and where, for $i\in I$, $q_i$ is some
location $\in Q$, and each $L_i^j$ for $j=1,\ldots,d$ is a regular language
$\in\Reg(\Mess)$.
\end{definition}
Let $\Rcal\subseteq 2^\ConfS$ denote the set of all regions of $S$. A
monotonic operator $f$ is \emph{regularity-preserving}, if
$f(R_1,\ldots,R_n)\in\Rcal$ when $R_1,\ldots,R_n\in\Rcal$. A
regularity-preserving $f$ is \emph{effective} if a representation for
$f(R_1,\ldots,R_n)$ can be computed uniformly from representations for the
$R_i$'s (and $S$). For example, the set-theoretical $\cap$, $\cup$ are
regularity-preserving and effective. While not a monotonic operator,
complementation is regularity-preserving and effective. Hence the dual
$\widetilde{f}$ of any $f$ is regularity-preserving and effective when $f$
is.

For the verification of (lossy) channel systems in general, and the resolution
of  games in particular, some useful operators are the unary
pre-images $\Pre_S[\delta]$ for $\delta\in\Delta$, and the upward- and
downward-closures $\Cu$ and $\Cd$, defined with
\begin{xalignat*}{2}
   \Pre_S[\delta](U) &\egdef \{c\in\ConfS~|~\exists c'\in U: c\step{\delta} c'\}\:,
&  \Cu(U) &\egdef \{c\in\ConfS~|~\exists c'\in U: c'\subword c\}\:,
\\
   \Pre_S(U) &\egdef {\textstyle \bigcup_{\delta\in\Delta}\Pre_S[\delta](U)}\:,
&  \Cd(U) &\egdef \{c\in\ConfS~|~\exists c'\in U: c\subword c'\}\:.
\end{xalignat*}
Observe that $\Pre_S[\delta]$ and $\Pre_S$ are pre-images for steps of channel systems
without/before message perturbations, while $\Cu$ and $\Cd$ are pre- and
post-images for the message-losing relation.
$\Cu$ and $\Cd$ are closure operators. Their duals are \emph{interior
  operators}: $\Kd(U)\egdef\widetilde{\Cu}(U)$ and
$\Ku(U)\egdef\widetilde{\Cd}(U)$ are the largest downward-closed and, resp.,
upward-closed, subsets of $U$.
Finally, we are also interested in pre-images for $\preceq_\dup$: we write
$\invTdup(U)$ for $\{c~|~\exists c'\in U: c\preceq_\dup c'\}$. We remark that
$\invTdup(\ConfS) = \ConfS$, and that $\invTdup(\Cu U)=\Cu(\invTdup
(\Cu U))=\Cu(\invTdup(U))$, i.e., the definition
of $c\leadsto c'$ is not sensitive to the order of perturbations.
\begin{fact}
$\Pre_S[\delta]$, $\Pre_S$, $\Cu$, $\Cd$, $\invTdup$ and their duals are
  regularity-preserving and effective (monotonic) operators.
\end{fact}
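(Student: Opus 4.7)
The plan is to treat each operator in turn and reduce to standard effective closure properties of regular languages---Boolean operations, left and right quotients by a letter, shuffle with $\Mess^*$, and images/pre-images under rational transductions---applied componentwise on each channel and then closed under the finite union indexing a region. Since any region can be written as $R = \bigcup_{i \in I} \{q_i\} \times L_i^1 \times \cdots \times L_i^d$ with each $L_i^j$ regular, it suffices to show that each operator's effect on a single location-indexed product can be expressed via effective regular constructions on the $L_i^j$'s, and then to union over the finite index set $I$.

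For $\Pre_S[\delta]$ with $\delta = (q_1, g, \op, q_2)$ and $g = (G_1,\ldots,G_d)$, I would first restrict $R$ to its summands with location $q_2$; then on the channel $\ttc_i$ affected by $\op$ I would apply the appropriate effective regular transformation on its language component $L^i$: if $\op = \ttc_i!a$, replace $L^i$ by the right-quotient $\{w \in \Mess^* \mid w \cdot a \in L^i\}$; if $\op = \ttc_i?a$, replace $L^i$ by $a \cdot L^i$; if $\op = \surd$, leave every channel unchanged. Then I would intersect each $L^j$ with $G_j$ to enforce the guard, and relabel the location to $q_1$. Each step is a standard effective regular operation, and $\Pre_S = \bigcup_{\delta \in \Delta} \Pre_S[\delta]$ follows by a finite union of regions.

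For $\Cu$ and $\Cd$ I would work componentwise on the subword order over $\Mess^*$: $\Cu(L)$ is obtained by shuffling $L$ with $\Mess^*$, equivalently by adding $\Mess$-labeled self-loops at every state of a DFA for $L$; and $\Cd(L)$ by inserting an $\varepsilon$-transition parallel to every letter-transition of an NFA for $L$ (with accepting states unchanged). For $\invTdup$, I would exploit that $\Tdup$ is realized by a letter-to-letter finite-state transducer that nondeterministically outputs $a$ or $aa$ on each input letter $a$; the classical preservation theorem for rational transductions then yields $\invTdup(L)$ regular and effectively computable from $L$. All three constructions lift to regions via componentwise application and a finite union over $I$.

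For the duals, I would invoke that regions are effectively closed under complementation---a single product's complement is a finite union of products using De Morgan on each channel and partitioning on locations---so $\widetilde{f}(U) = \neg f(\neg U)$ inherits regularity-preservation and effectiveness from $f$ directly. There is no real obstacle here, as every ingredient is a textbook effective construction on regular languages; the most delicate step is invoking the rational-transduction closure theorem for $\invTdup$, and ensuring throughout that the componentwise constructions genuinely respect the per-channel structure of a region.
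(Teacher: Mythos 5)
The paper states this as a fact without proof, treating it as a bundle of standard automata-theoretic constructions; your writeup correctly supplies exactly those constructions (per-channel quotient/concatenation plus guard intersection on the source side for $\Pre_S[\delta]$, shuffle with $\Mess^*$ and letter-skipping for $\Cu$ and $\Cd$, rational-transduction pre-images for $\invTdup$, and complementation of regions for the duals), lifted componentwise over the finite union defining a region. One cosmetic point: the transducer realizing $\Tdup$ is not letter-to-letter (it outputs $aa$ on input $a$), but it is still a rational transduction, so the closure theorem you invoke applies unchanged.
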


When using effective regularity-preserving operators, one can evaluate any
closed $L_\mu$ term that does not include fixpoints. For a closed term $U=\mu
X.\phi(X)$, or $V=\nu X.\phi(X)$, with a single fixpoint, any
approximant $U_k$ and $V_k$ for a finite $k\in\Nat$ can be evaluated but
there is no guarantee that the fixpoint is reached in finite time, or that
the fixpoint is a regular region.
However, for fixpoints over a WQO like $\ConfS$, there exists a generic finite-time
convergence theorem.

\begin{definition}[Guarded $L_\mu$ terms]
1.~A variable $Z$ is \emph{upward-guarded} in an $L_\mu$ term $\phi$ if every
occurrence of $Z$ in $\phi$ is under the scope of 
an upward-closure $\Cu$ or upward-interior $\Ku$ operator.

\noindent
2.~It is \emph{downward-guarded}
in $\phi$ if all its  occurrences in $\phi$ are under the scope of a
downward-closure $\Cd$ or downward-interior $\Kd$ operator.

\noindent
3.~A term $\phi$ is \emph{guarded} if every least fixpoint subterm $\mu
Z.\psi$ of $\phi$ has $Z$ upward-guarded in $\psi$, and every
greatest fixpoint subterm $\nu Z.\psi$ has $Z$ downward-guarded in $\psi$.
\end{definition}
\begin{theorem}[Effective \& regularity-preserving fixpoints]
\label{th-guarded-effective}
Any guarded $L_\mu$ term $\phi(X_1,\ldots,X_n)$
built with regularity-preserving and effective operators
denotes a regularity-preserving and effective $n$-ary operator.
Furthermore the denotation of a closed term can be evaluated by computing
its approximants which are guaranteed to converge after finitely many steps.
\end{theorem}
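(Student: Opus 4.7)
The plan is a structural induction on the term $\phi$, with the non-fixpoint cases being essentially routine: a free variable or a constant region is handled trivially, and for $\phi=f(\phi_1,\ldots,\phi_k)$ the claim follows by applying the inductive hypothesis to each $\phi_i$ and using the assumed regularity-preservation and effectiveness of $f$ (plus the fact that regular regions, being finite Boolean combinations of state-indexed regular languages over the finite alphabet $\Mess$, admit computable Boolean operations and decidable equality). The real content lies in the fixpoint case, where the key ingredient is that $(\ConfS,\subword)$ is a well-quasi-order: Higman's lemma gives WQO channel-wise, Dickson's lemma combines the finitely many channels, and the finiteness of $Q$ handles the control-location component. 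The standard consequence is the \emph{ascending chain condition} on upward-closed subsets of $\ConfS$, and dually the \emph{descending chain condition} on downward-closed ones.

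For a guarded least-fixpoint subterm $\mu Z.\psi(Z)$ I would exploit the upward-guardedness of $Z$ to rewrite $\psi$ in the equivalent form $\psi^{\star}\bigl(\Cu(Z),\Ku(Z)\bigr)$, where $\psi^{\star}$ contains no further occurrence of $Z$. The Kleene approximants $U_0=\emptyset\subseteq U_1\subseteq\cdots$, with $U_{k+1}=\psi(U_k)$, then produce two increasing chains $\bigl(\Cu(U_k)\bigr)_k$ and $\bigl(\Ku(U_k)\bigr)_k$ of upward-closed subsets of $\ConfS$---monotonicity of $\Cu$ is immediate, and $\Ku$ is monotonic because $\Ku(U)$ is by definition the largest upward-closed subset of $U$. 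By the ACC both chains stabilize after finitely many steps, and once both are stationary so is $U_k$, which therefore coincides with the least fixpoint. The greatest-fixpoint case is exactly dual: $V_0=\ConfS\supseteq V_1\supseteq\cdots$ yields decreasing chains $\bigl(\Cd(V_k)\bigr)_k$ and $\bigl(\Kd(V_k)\bigr)_k$ of downward-closed sets, which stabilize by the DCC. In both cases the inductive hypothesis guarantees that each approximant is a regular region effectively computable from its predecessor, and stabilization is detected by the decidable equality of regular regions.

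The main obstacle I anticipate is threading the induction cleanly through open subterms: one must check that the guardedness of every inner fixpoint is preserved when the free $X_i$'s are instantiated by arbitrary regions (a $\mu$-calculus substitution argument with suitable $\alpha$-renaming of bound variables), and that the stabilization argument behaves uniformly in the substitution, so that the operator denoted by an open guarded term is itself effective and regularity-preserving rather than merely pointwise so. This is largely careful bookkeeping in the style of the WQO-based analyses already developed in~\cite{BBS-lpar,BS-fmsd2012}, but it is the step where the proof ceases to be ``obvious'' and requires some formal care.
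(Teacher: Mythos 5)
First, a point of reference: the paper does not actually prove Theorem~\ref{th-guarded-effective}; it invokes it as a special case of the main result of~\cite{BS-fmsd2012}. So you are supplying a proof where the paper supplies a citation. Your overall architecture --- structural induction on the term, the ascending chain condition on upward-closed (dually, descending chain condition on downward-closed) subsets of the WQO $(\ConfS,\subword)$, stabilization of approximant chains, and effectiveness via decidable equality of regular regions --- is indeed the strategy behind the cited result, and the non-fixpoint cases and the detection of stabilization are handled correctly.

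There is, however, a genuine gap in the pivotal step. Upward-guardedness of $Z$ in $\psi$ does \emph{not} yield the ``equivalent form'' $\psi^{\star}\bigl(\Cu(Z),\Ku(Z)\bigr)$ with $\psi^{\star}$ free of $Z$: guardedness only places each occurrence of $Z$ \emph{somewhere below} a $\Cu$ or $\Ku$, as in $\Cu(\Pre(Z))$, and such a term need not factor through $(\Cu(Z),\Ku(Z))$ --- two sets with the same upward closure and the same upward interior can have different images under a monotone operator applied \emph{inside} the guard. The repair for a single fixpoint is easy: write $\psi=\psi^{\star}\bigl(\Cu(\theta_1(Z)),\ldots,\Ku(\theta_m(Z))\bigr)$ where the $\theta_j$ may still contain $Z$; the sets $\Cu(\theta_j(U_k))$, $\Ku(\theta_j(U_k))$ are then increasing chains of upward-closed sets and the ACC applies. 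But this syntactic pull-out breaks down precisely in the case the theorem is about, namely nested fixpoints: the outermost guard above an occurrence of $Z$ may sit inside an inner binder, as in $\mu Z.\,\nu Y.\bigl(\cdots\Cu(g(Y,Z))\cdots\bigr)$, and then the guarded subterm contains the bound variable $Y$ and cannot be extracted as an argument of a $Z$-free context. Closing this requires either a \emph{semantic} notion of guarded operator (one factoring through finitely many monotone maps with upward-closed values) together with a proof that this notion is preserved by fixpoint formation, or a stabilization argument for doubly-indexed increasing families $g_j(Y^k_i,Z_k)$ of upward-closed sets. That is the mathematical core of the fixpoint case and of the cited proof; it is not the substitution bookkeeping for open terms that you identify at the end as the delicate step.
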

Theorem~\ref{th-guarded-effective} is a special case of the main result
of~\cite{BS-fmsd2012} (see also~\cite{kouzmin2004}) where it is
stated for arbitrary well-quasi-ordered sets (WQO's) and a generic notion
of ``effective regions''.
We recall that, by Higman's lemma,
$(\ConfS,\subword)$ is a \emph{well-quasi-ordered set}, i.e., a
quasi-ordered set ---$\subword$ is reflexive and transitive--- such that
every infinite sequence $c_0,c_1,c_2,\ldots$ contains an increasing
subsequence $c_i\subword c_j$ (with $i< j$).

\subsection{Stochastic games on lossy channel systems}

In the context of section~\ref{sec:plosses} and the stochastic arena
$\G_S$, we can reformulate the $\Pre$ operator used in
Section~\ref{sec-charac} as a regularity-preserving and effective operator.

When we only consider message losses, $\Pre[\delta](X) =
\Pre_S[\delta](\Cu X)$ (since $D_\env$ is
\emph{compatible} with the nondeterministic semantics).  If also
duplications are considered, then
$\Pre[\delta](X)=\Pre_S[\delta]\bigl(\invTdup(\Cu X)\bigr)$.
In order to deal uniformly with the two cases we shall let
$\Tdup$ be the identity relation when duplications are not considered.
By duality $\wPre[\delta](X)=\wPre_S[\delta](\Kd \winvTdup(X))$ and
the derived operators satisfy $\Pexists(X,Y) = \Pexists(\Kd
X,\Cu Y)$, $\Pforall(X,Y) = \Pforall(\Kd
X,\Cu Y)$ and $\PxA(X,Y) = \PxA(\Kd
X,\Cu Y)$.  Thus 
Theorem~\ref{th-correct-W} rewrites:
\begin{equation}
\label{eq-charac-WinA-lcs}
\WinA^{=1} \bigwedge_{i=1}^r \Box\Diamond R_i
\:=\:
\nu X.\bigcap_{i=1}^r H_i(X)
\:=\:
\nu X.\PxA\Bigl(\Kd \bigcap_{i=1}^r H_i(X), \ConfS\Bigr)
\:,
\end{equation}
with $H_i(X) \egdef\mu Z.X\cap \PxA\bigl(\Kd X,\Cu (R_i\cup Z)\bigr)$.
Observe how the closure properties of $\PxA$ let us easily rewrite $W'$
into a guarded term. The same technique does not apply to the simpler
term $W$ and this explains why we developed two characterizations of
the winning set in Section~\ref{sec-charac}. However, in the case
where $r=1$, the characterization with $W$ can be simplified in $W_1$
and Theorem~\ref{th-correct-W} yields the following \emph{guarded
  term} for stochastic B\"uchi games on lossy channel systems: \(
\WinA^{=1} \Box\Diamond R_1 = \nu X.\mu Z.\PxA\bigl(\Kd X,\Cu (R_1\cup
Z)\bigr). \)

Since $H_i(X)$ and $\WinA^{=1} \bigwedge_{i=1}^r \Box\Diamond R_i$
have guarded $L_\mu$ expressions, the following decidability result is
an immediate application of Theorem~\ref{th-guarded-effective} to
Eq.~\eqref{eq-charac-WinA-lcs}.
\begin{theorem}[Decidability of Generalized B\"uchi games with probability 1]
\label{th:decid}
In stochastic games on lossy channel system $S$ with regular arena
partition $\Conf_S = \ConfA\sqcup \ConfB$ and for regular goal regions
$R_1,\ldots,R_r$, the winning set $\WinA^{=1} \bigwedge_{i=1}^r
\Box\Diamond R_i$ is a regular region that can be computed uniformly
from $S$ and $R_1,\ldots,R_r$.
\end{theorem}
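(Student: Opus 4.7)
The plan is to obtain the result as a direct application of Theorem~\ref{th-guarded-effective} (effective computation of guarded $L_\mu$ terms over WQOs) to the characterization already established in Eq.~\eqref{eq-charac-WinA-lcs}, namely
\[
\WinA^{=1} \bigwedge_{i=1}^r \Box\Diamond R_i
\;=\; \nu X.\PxA\Bigl(\Kd \textstyle\bigcap_{i=1}^r H_i(X),\,\ConfS\Bigr),
\quad
H_i(X) = \mu Z.\,X\cap \PxA\bigl(\Kd X,\Cu(R_i\cup Z)\bigr).
\]
Two things have to be checked: that every constituent operator is regularity-preserving and effective, and that the term is guarded in the sense of Theorem~\ref{th-guarded-effective}.

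For the first point I would invoke the Fact immediately preceding Theorem~\ref{th-guarded-effective}, which grants regularity-preservation and effectiveness of the atomic operators $\Pre_S[\delta]$, $\Cu$, $\Cd$, $\invTdup$ and their duals. Since $\Delta$ is finite and, in the PLCS setting, $\Pre[m](Y) = \Pre_S[m](\invTdup(\Cu Y))$ and $\wPre[m](X)$ is its dual, each of $\Pexists$ and $\Pforall$ ---being finite Boolean combinations of these atomic operators--- is regularity-preserving and effective. Intersecting with the regular sets $\ConfA$ and $\ConfB$ (regularity of the partition is assumed) one concludes that $\PxA$ itself enjoys the same properties. The goal regions $R_i$ are regular by hypothesis, so every leaf of the $L_\mu$ term is regular and every internal operator is regularity-preserving and effective.

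For guardedness I would inspect the two binders. In each $\mu Z$-subterm the variable $Z$ occurs only inside $\Cu(R_i\cup Z)$, so $Z$ is upward-guarded. In the outer $\nu X$-subterm every occurrence of $X$ in the body sits inside $\bigcap_i H_i(X)$, which is itself wrapped by the downward-interior operator $\Kd$; hence $X$ is downward-guarded in the body. The whole term is therefore guarded, and Theorem~\ref{th-guarded-effective} applies, simultaneously giving regularity of the value and effectiveness of the stationary iteration of approximants, with uniformity in $S$ and the $R_i$'s coming for free from the uniform effectiveness statement.

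The one delicate point ---and the reason Theorem~\ref{th-correct-W} provides \emph{two} fixpoint characterizations--- is that the simpler expression $W = \nu X.\bigcap_i H_i(X)$ is not guarded: the left occurrence of $X$ in ``$X\cap\ldots$'' inside $H_i(X)$ leaves the outer $X$ unshielded in the body of $\nu X$. The characterization $W'$ inserts exactly the outer $\PxA(\Kd\,\cdot\,,\ConfS)$ that supplies the missing downward guard, and is what makes the present decidability argument go through; in the special case $r=1$ the even simpler guarded term $W_1 = \nu X.\mu Z.\PxA(\Kd X,\Cu(R_1\cup Z))$ works, as already noted after Eq.~\eqref{eq-charac-WinA-lcs}.
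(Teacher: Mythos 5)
Your proposal is correct and follows essentially the same route as the paper: the paper's proof is exactly the one-line observation that Eq.~\eqref{eq-charac-WinA-lcs} is a guarded $L_\mu$ term built from regularity-preserving, effective operators, so Theorem~\ref{th-guarded-effective} applies immediately. The details you supply --- finiteness of $\Delta$ making $\Pexists$ and $\Pforall$ finite Boolean combinations of the atomic operators, the guardedness check for the $\mu Z$ and $\nu X$ binders, and the remark that $W$ itself is unguarded while $W'$ (and $W_1$ for $r=1$) is --- all match the surrounding discussion in the paper.
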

Furthermore, the winning strategies have simple finite
representations. One first computes the regular region $W$ ($=W'$).
Then for each rule $\delta\in\Delta$, and each $i=1,\ldots,r$, one
computes $V_i^\delta\egdef\ConfA\cap\Pre_S[\delta]\bigl(\Kd W\cap\Cu\bigl(R_i\cup H_i(W)\bigr)\bigr)$, these are again regular
regions. The strategy $\sigma_i$ for Alice is then ``when in
$V_i^\delta$, choose $\delta$'' and the strategy $\sigma_W$ is just a
combination of the $\sigma_i$'s using finite memory and testing when
we are in the $R_i$'s.

\paragraph{On complexity.}
Theorem~\ref{th-guarded-effective} does not only show that
$W=\WinA^{=1}\bigwedge_i\Box\Diamond R_i$ is computable from $S$ and
$R_1,\ldots,R_r$. It also shows that $W$ is obtained by computing  the sequence of
approximants $(W_{k})_{k\in\Nat}$ ---given by
$W_{0}=\ConfS$ and $W_{k+1} = \PxA(\Kd \HA(W_{k}),\ConfS)$--- until
the sequence stabilizes, which is guaranteed to eventually
occur. Furthermore, computing $\HA(W_{k})$, i.e., $\bigcap_{i=1}^r
H_i(W_{k})$, involves $r$ fixpoint computations that can use the same
technique: sequences of approximants guaranteed to converge in finite
time by Theorem~\ref{th-guarded-effective}.

There now exist generic upper bounds on the convergence time of such
sequences, see~\cite{SS-icalp11,SS-esslli2012}. In our case, they
entail that the above symbolic algorithm computing the regular region
$\WinA^{=1} (\bigwedge_{i=1}^r \Box\Diamond R_i)$ is in
$\bfF_{\omega^\omega}$, the first level in the Fast-Growing Complexity
hierarchy that is not multiply-recursive, hence has
``Hyper-Ackermannian'' complexity.

This bound is optimal: deciding whether $c \in \WinA^{=1}
(\bigwedge_{i=1}^r \Box\Diamond R_i)$ is $\bfF_{\omega^\omega}$-hard since this
generalizes reachability questions (on lossy channel systems)
that are $\bfF_{\omega^\omega}$-hard~\cite{CS-lics08,KS-fossacs2013}.
\begin{corollary}
Deciding whether $c \in \WinA^{=1} (\bigwedge_{i=1}^r \Box\Diamond R_i)$
for given $S$, $c$, $R_1$, \ldots, $R_r$ is $\bfF_{\omega^\omega}$-complete.
\end{corollary}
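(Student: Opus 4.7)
The plan is to establish matching $\bfF_{\omega^\omega}$ upper and lower bounds, and the upper bound is essentially already at hand from the discussion preceding the Corollary: by Eq.~\eqref{eq-charac-WinA-lcs} and Theorem~\ref{th-guarded-effective}, the winning set is computed by iterating a guarded $L_\mu$ expression over the well-quasi-ordered set $(\ConfS,\subword)$ until its approximants stabilize, after which membership of $c$ in the resulting regular region is trivial to test. The generic bounds on the length of controlled bad sequences over Higman's ordering from~\cite{SS-icalp11,SS-esslli2012} place this stabilization time, and therefore the whole symbolic procedure, within $\bfF_{\omega^\omega}$.

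For the lower bound, I would reduce from control-state reachability in (nondeterministic) lossy channel systems, which is $\bfF_{\omega^\omega}$-hard by~\cite{CS-lics08,KS-fossacs2013}. Given an LCS $S$ with initial configuration $c_0=(q_0,\varepsilon,\ldots,\varepsilon)$ and target location $q_f$, I would build a single-player PLCS $S'$ in which every configuration belongs to Alice (so that $\ConfB=\emptyset$ and the finite-choice hypothesis on Bob holds vacuously), $q_f$ is made absorbing by a $\surd$ self-loop, and a small \emph{reset gadget} lets Alice, from every non-$q_f$ location, drain the channels via receive rules and return to $q_0$. Setting $R_1 \egdef \{q_f\}\times{\Mess^*}^\Ch$, the reduction claims that $q_f$ is LCS-reachable from $c_0$ in $S$ if and only if $c_0\in\WinA^{=1}\Box\Diamond R_1$ in $S'$.

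The step that requires real thought is the ``if'' direction (the converse is almost immediate: any almost-sure winning run projects to a genuine reachability witness in the nondeterministic semantics of $S$, and the gadget cannot introduce spurious witnesses since every new path it enables must pass through $c_0$). Here the finite attractor $F_0=\{(q,\varepsilon,\ldots,\varepsilon)\mid q\in Q\}$ does the work: from a fixed LCS witness $c_0\step{\delta_1}\cdots\step{\delta_n}(q_f,\cdot)$, Alice's strategy attempts the sequence $\delta_1,\ldots,\delta_n$ and, whenever a message perturbation derails the attempt, uses the reset gadget to drain back to $c_0$ and tries again. Each attempt succeeds with some uniform positive probability $\gamma>0$ (only finitely many probabilistic events must cooperate along a path of fixed length), and since $F_0$ is almost surely visited infinitely often, Alice obtains infinitely many essentially independent retry opportunities; a Borel--Cantelli argument then gives $\PR(\Diamond R_1)=1$, which the self-loop at $q_f$ upgrades to $\PR(\Box\Diamond R_1)=1$. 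The main obstacle is the bookkeeping around the reset gadget: one must ensure that it lets Alice drain the channels with positive probability in a bounded number of steps while not enabling any new reachability witness in the nondeterministic semantics, and that the resulting arena still satisfies the well-behavedness hypotheses (in particular the finite-attractor property) required for Theorem~\ref{th-correct-W} to apply.
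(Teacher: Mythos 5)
Your argument takes essentially the same route as the paper: the upper bound is exactly the paper's appeal to Theorem~\ref{th-guarded-effective} together with the generic length-function bounds of~\cite{SS-icalp11,SS-esslli2012}, and for the lower bound the paper merely asserts that the problem ``generalizes reachability questions'' on lossy channel systems, which your single-player restart-gadget reduction (with an emptiness guard on the exit back to $q_0$ so that no spurious witnesses arise, and with the finite attractor plus a Borel--Cantelli argument powering the retries) correctly instantiates. The only blemish is a harmless label swap: the direction you call ``if'' (building Alice's retry strategy from a reachability witness) is in fact the ``only if'' direction of the equivalence as you stated it.
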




\section{Concluding remarks}
\label{sec-concl}

We gave a simple fixpoint characterization of winning sets and winning
strategies for 2-player stochastic games where a generalized B\"uchi
objective should be satisfied almost-surely. The characterization is
correct for any countable arena with a
finite attractor and satisfying the finite-choice assumption for Bob.

Such fixpoint characterizations lead to symbolic model-checking and
symbolic strategy-synthesizing algorithms for infinite-state systems
and programs. The main issue here is the finite-time convergence of
the fixpoint computations.  For well-quasi-ordered sets, one can use
generic results showing the finite-time convergence of so-called
``guarded'' fixpoint expressions as we demonstrated by showing the
decidability of generalized B\"uchi games on probabilistic lossy
channel systems, a well-quasi-ordered model that comes naturally
equipped with a finite attractor.

We believe that Theorem~\ref{th-guarded-effective} has more general applications
for games, stochastic or not, on well-quasi-ordered
infinite-state systems. We would like to mention quantitative objectives as
an interesting direction for future works (see~\cite{rabinovich2006,zielonka2010}).

\paragraph{Acknowledgements.}
We thank the anonymous referees who spotted a serious problem in the
previous version of this submission and made valuable suggestions that
let us 
improve the paper.



\ifthenelse{\boolean{eptcsstyle}}{ 
\bibliographystyle{eptcs}
\bibliography{patched}
}{
\bibliographystyle{plain}
\bibliography{../biblio}

\begin{thebibliography}{10}
\providecommand{\bibitemdeclare}[2]{}
\providecommand{\surnamestart}{}
\providecommand{\surnameend}{}
\providecommand{\urlprefix}{Available at }
\providecommand{\url}[1]{\texttt{#1}}
\providecommand{\href}[2]{\texttt{#2}}
\providecommand{\urlalt}[2]{\href{#1}{#2}}
\providecommand{\doi}[1]{doi:\urlalt{http://dx.doi.org/#1}{#1}}
\providecommand{\bibinfo}[2]{#2}

\bibitemdeclare{article}{abdulla2005}
\bibitem{abdulla2005}
\bibinfo{author}{P.~A. \surnamestart Abdulla\surnameend},
  \bibinfo{author}{C.~\surnamestart Baier\surnameend},
  \bibinfo{author}{S.~\surnamestart {Purushothaman Iyer}\surnameend} \&
  \bibinfo{author}{B.~\surnamestart Jonsson\surnameend} (\bibinfo{year}{2005}):
  \emph{\bibinfo{title}{Simulating Perfect Channels with Probabilistic Lossy
  Channels}}.
\newblock {\sl \bibinfo{journal}{Information and Computation}}
  \bibinfo{volume}{197}(\bibinfo{number}{1--2}), pp. \bibinfo{pages}{22--40},
  \doi{10.1016/j.ic.2004.12.001}.

\bibitemdeclare{inproceedings}{ABdAMS-fossacs08}
\bibitem{ABdAMS-fossacs08}
\bibinfo{author}{P.~A. \surnamestart Abdulla\surnameend},
  \bibinfo{author}{N.~\surnamestart {Ben Henda}\surnameend},
  \bibinfo{author}{L.~\surnamestart {de Alfaro}\surnameend},
  \bibinfo{author}{R.~\surnamestart Mayr\surnameend} \&
  \bibinfo{author}{S.~\surnamestart Sandberg\surnameend}
  (\bibinfo{year}{2008}): \emph{\bibinfo{title}{Stochastic Games with Lossy
  Channels}}.
\newblock In: {\sl \bibinfo{booktitle}{Proc.\ 11th Int.\ Conf.\ Foundations of
  Software Science and Computational Structures (FOSSACS 2008)}}, {\sl
  \bibinfo{series}{Lecture Notes in Computer Science}} \bibinfo{volume}{4962},
  \bibinfo{publisher}{Springer}, pp. \bibinfo{pages}{35--49},
  \doi{10.1007/978-3-540-78499-9\_4}.

\bibitemdeclare{article}{ABRS-icomp}
\bibitem{ABRS-icomp}
\bibinfo{author}{P.~A. \surnamestart Abdulla\surnameend},
  \bibinfo{author}{N.~\surnamestart Bertrand\surnameend},
  \bibinfo{author}{A.~\surnamestart Rabinovich\surnameend} \&
  \bibinfo{author}{{\relax Ph}~\surnamestart Schnoebelen\surnameend}
  (\bibinfo{year}{2005}): \emph{\bibinfo{title}{Verification of Probabilistic
  Systems with Faulty Communication}}.
\newblock {\sl \bibinfo{journal}{Information and Computation}}
  \bibinfo{volume}{202}(\bibinfo{number}{2}), pp. \bibinfo{pages}{141--165},
  \doi{10.1016/j.ic.2005.05.008}.

\bibitemdeclare{inproceedings}{abdulla-icalp05}
\bibitem{abdulla-icalp05}
\bibinfo{author}{P.~A. \surnamestart Abdulla\surnameend},
  \bibinfo{author}{J.~\surnamestart Deneux\surnameend},
  \bibinfo{author}{J.~\surnamestart Ouaknine\surnameend} \&
  \bibinfo{author}{J.~\surnamestart Worrell\surnameend} (\bibinfo{year}{2005}):
  \emph{\bibinfo{title}{Decidability and complexity results for timed automata
  via channel machines}}.
\newblock In: {\sl \bibinfo{booktitle}{Proc.\ 32nd Int.\ Coll.\ Automata,
  Languages, and Programming (ICALP 2005)}}, {\sl \bibinfo{series}{Lecture
  Notes in Computer Science}} \bibinfo{volume}{3580},
  \bibinfo{publisher}{Springer}, pp. \bibinfo{pages}{1089--1101},
  \doi{10.1007/11523468\_88}.

\bibitemdeclare{book}{arnold2001}
\bibitem{arnold2001}
\bibinfo{author}{A.~\surnamestart Arnold\surnameend} \&
  \bibinfo{author}{D.~\surnamestart Niwi{\'n}ski\surnameend}
  (\bibinfo{year}{2001}): \emph{\bibinfo{title}{Rudiments of
  {$\mu$}-Calculus}}.
\newblock {\sl \bibinfo{series}{Studies in Logic and the Foundations of
  Mathematics}} \bibinfo{volume}{146}, \bibinfo{publisher}{Elsevier Science},
  \doi{10.1016/S0049-237X(01)80001-X}.

\bibitemdeclare{article}{BBS-ipl}
\bibitem{BBS-ipl}
\bibinfo{author}{C.~\surnamestart Baier\surnameend},
  \bibinfo{author}{N.~\surnamestart Bertrand\surnameend} \&
  \bibinfo{author}{{\relax Ph}.~\surnamestart Schnoebelen\surnameend}
  (\bibinfo{year}{2006}): \emph{\bibinfo{title}{A note on the
  attractor-property of infinite-state {Markov} chains}}.
\newblock {\sl \bibinfo{journal}{Information Processing Letters}}
  \bibinfo{volume}{97}(\bibinfo{number}{2}), pp. \bibinfo{pages}{58--63},
  \doi{10.1016/j.ipl.2005.09.011}.

\bibitemdeclare{inproceedings}{BBS-lpar}
\bibitem{BBS-lpar}
\bibinfo{author}{C.~\surnamestart Baier\surnameend},
  \bibinfo{author}{N.~\surnamestart Bertrand\surnameend} \&
  \bibinfo{author}{{\relax Ph}.~\surnamestart Schnoebelen\surnameend}
  (\bibinfo{year}{2006}): \emph{\bibinfo{title}{On computing fixpoints in
  well-structured regular model checking, with applications to lossy channel
  systems}}.
\newblock In: {\sl \bibinfo{booktitle}{Proc.\ 13th Int.\ Conf.\ on Logic for
  Programming and Artificial Intelligence, and Reasoning ({LPAR} 2006)}}, {\sl
  \bibinfo{series}{Lecture Notes in Computer Science}} \bibinfo{volume}{4246},
  \bibinfo{publisher}{Springer}, pp. \bibinfo{pages}{347--361},
  \doi{10.1007/11916277\_24}.

\bibitemdeclare{article}{BBS-acmtocl06}
\bibitem{BBS-acmtocl06}
\bibinfo{author}{C.~\surnamestart Baier\surnameend},
  \bibinfo{author}{N.~\surnamestart Bertrand\surnameend} \&
  \bibinfo{author}{{\relax Ph}.~\surnamestart Schnoebelen\surnameend}
  (\bibinfo{year}{2007}): \emph{\bibinfo{title}{Verifying nondeterministic
  probabilistic channel systems against \(\omega\)-regular linear-time
  properties}}.
\newblock {\sl \bibinfo{journal}{ACM Transactions on Computational Logic}}
  \bibinfo{volume}{9}(\bibinfo{number}{1}), \doi{10.1145/1297658.1297663}.

\bibitemdeclare{article}{baier2012b}
\bibitem{baier2012b}
\bibinfo{author}{C.~\surnamestart Baier\surnameend},
  \bibinfo{author}{T.~\surnamestart Br{\'a}zdil\surnameend},
  \bibinfo{author}{M.~\surnamestart Gr{\"o}{\ss}er\surnameend} \&
  \bibinfo{author}{A.~\surnamestart Kucera\surnameend} (\bibinfo{year}{2012}):
  \emph{\bibinfo{title}{Stochastic game logic}}.
\newblock {\sl \bibinfo{journal}{Acta Informatica}}
  \bibinfo{volume}{49}(\bibinfo{number}{4}), pp. \bibinfo{pages}{203--224},
  \doi{10.1007/s00236-012-0156-0}.

\bibitemdeclare{inproceedings}{baier99}
\bibitem{baier99}
\bibinfo{author}{C.~\surnamestart Baier\surnameend} \&
  \bibinfo{author}{B.~\surnamestart Engelen\surnameend} (\bibinfo{year}{1999}):
  \emph{\bibinfo{title}{Establishing Qualitative Properties for Probabilistic
  Lossy Channel Systems: An Algorithmic Approach}}.
\newblock In: {\sl \bibinfo{booktitle}{Proc.\ 5th AMAST Workshop Formal Methods
  for Real-Time and Probabilistic Systems (ARTS 1999)}}, {\sl
  \bibinfo{series}{Lecture Notes in Computer Science}} \bibinfo{volume}{1601},
  \bibinfo{publisher}{Springer}, pp. \bibinfo{pages}{34--52},
  \doi{10.1007/3-540-48778-6\_3}.

\bibitemdeclare{inproceedings}{BFL-lics2012}
\bibitem{BFL-lics2012}
\bibinfo{author}{P.~\surnamestart Barcel{\'o}\surnameend},
  \bibinfo{author}{D.~\surnamestart Figueira\surnameend} \&
  \bibinfo{author}{L.~\surnamestart Libkin\surnameend} (\bibinfo{year}{2012}):
  \emph{\bibinfo{title}{Graph Logics with Rational Relations and the
  Generalized Intersection Problem}}.
\newblock In: {\sl \bibinfo{booktitle}{Proc.\ 27th IEEE Symp.\ Logic in
  Computer Science (LICS 2012)}}, \bibinfo{publisher}{IEEE Comp.\ Soc.\ Press},
  pp. \bibinfo{pages}{115--124}, \doi{10.1109/LICS.2012.23}.

\bibitemdeclare{article}{BS-fmsd2012}
\bibitem{BS-fmsd2012}
\bibinfo{author}{N.~\surnamestart Bertrand\surnameend} \&
  \bibinfo{author}{{\relax Ph}.~\surnamestart Schnoebelen\surnameend}
  (\bibinfo{year}{2012}): \emph{\bibinfo{title}{Computable fixpoints in
  well-structured symbolic model checking}}.
\newblock {\sl \bibinfo{journal}{Formal Methods in System Design}},
  \doi{10.1007/s10703-012-0168-y}.
\newblock \bibinfo{note}{To appear. Long version of~\cite{BBS-lpar}}.

\bibitemdeclare{inproceedings}{bouajjani2000b}
\bibitem{bouajjani2000b}
\bibinfo{author}{A.~\surnamestart Bouajjani\surnameend},
  \bibinfo{author}{B.~\surnamestart Jonsson\surnameend},
  \bibinfo{author}{M.~\surnamestart Nilsson\surnameend} \&
  \bibinfo{author}{T.~\surnamestart Touili\surnameend} (\bibinfo{year}{2000}):
  \emph{\bibinfo{title}{Regular Model Checking}}.
\newblock In: {\sl \bibinfo{booktitle}{Proc.\ 12th Int.\ Conf.\ Computer Aided
  Verification (CAV 2000)}}, {\sl \bibinfo{series}{Lecture Notes in Computer
  Science}} \bibinfo{volume}{1855}, \bibinfo{publisher}{Springer}, pp.
  \bibinfo{pages}{403--418}, \doi{10.1007/10722167\_31}.

\bibitemdeclare{inproceedings}{BMOSW-stacs08}
\bibitem{BMOSW-stacs08}
\bibinfo{author}{P.~\surnamestart Bouyer\surnameend},
  \bibinfo{author}{N.~\surnamestart Markey\surnameend},
  \bibinfo{author}{J.~\surnamestart Ouaknine\surnameend},
  \bibinfo{author}{{\relax Ph}.~\surnamestart Schnoebelen\surnameend} \&
  \bibinfo{author}{J.~\surnamestart Worrell\surnameend} (\bibinfo{year}{2008}):
  \emph{\bibinfo{title}{On Termination for Faulty Channel Machines}}.
\newblock In: {\sl \bibinfo{booktitle}{Proc.\ 25th Ann.\ Symp.\ Theoretical
  Aspects of Computer Science (STACS 2008)}}, {\sl \bibinfo{series}{Leibniz
  Int.\ Proc.\ in Informatics}}~\bibinfo{volume}{1},
  \bibinfo{publisher}{Leibniz-Zentrum f{\"u}r Informatik}, pp.
  \bibinfo{pages}{121--132}, \doi{10.4230/LIPIcs.STACS.2008.1339}.

\bibitemdeclare{inproceedings}{bouyer2008}
\bibitem{bouyer2008}
\bibinfo{author}{P.~\surnamestart Bouyer\surnameend},
  \bibinfo{author}{N.~\surnamestart Markey\surnameend} \&
  \bibinfo{author}{P.-A. \surnamestart Reynier\surnameend}
  (\bibinfo{year}{2008}): \emph{\bibinfo{title}{Robust Analysis of Timed
  Automata via Channel Machines}}.
\newblock In: {\sl \bibinfo{booktitle}{Proc.\ 11th Int.\ Conf.\ Foundations of
  Software Science and Computational Structures (FOSSACS 2008)}}, {\sl
  \bibinfo{series}{Lecture Notes in Computer Science}} \bibinfo{volume}{4962},
  \bibinfo{publisher}{Springer}, pp. \bibinfo{pages}{157--171},
  \doi{10.1007/978-3-540-78499-9\_12}.

\bibitemdeclare{incollection}{bradfield2007}
\bibitem{bradfield2007}
\bibinfo{author}{J.~\surnamestart Bradfield\surnameend} \&
  \bibinfo{author}{C.~\surnamestart Stirling\surnameend}
  (\bibinfo{year}{2007}): \emph{\bibinfo{title}{Modal mu-calculi}}.
\newblock In \bibinfo{editor}{P.~\surnamestart Blackburn\surnameend},
  \bibinfo{editor}{J.~\surnamestart Van~Benthem\surnameend} \&
  \bibinfo{editor}{F.~\surnamestart Wolter\surnameend}, editors: {\sl
  \bibinfo{booktitle}{Handbook of Modal Logic}}, chapter~\bibinfo{chapter}{12},
  {\sl \bibinfo{series}{Studies in Logic and Practical
  Reasoning}}~\bibinfo{volume}{3}, \bibinfo{publisher}{Elsevier Science}, pp.
  \bibinfo{pages}{721--756}, \doi{10.1016/S1570-2464(07)80015-2}.

\bibitemdeclare{inproceedings}{BBE-fsttcs10}
\bibitem{BBE-fsttcs10}
\bibinfo{author}{T.~\surnamestart Br{\'a}zdil\surnameend},
  \bibinfo{author}{V.~\surnamestart Brozek\surnameend} \&
  \bibinfo{author}{K.~\surnamestart Etessami\surnameend}
  (\bibinfo{year}{2010}): \emph{\bibinfo{title}{One-Counter Stochastic Games}}.
\newblock In: {\sl \bibinfo{booktitle}{Proc.\ 30th Conf.\ Found.\ of Software
  Technology and Theor.\ Comp.\ Sci.\ (FST\&TCS 2010)}}, {\sl
  \bibinfo{series}{Leibniz Int.\ Proc.\ in Informatics}}~\bibinfo{volume}{8},
  \bibinfo{publisher}{Leibniz-Zentrum f{\"u}r Informatik}, pp.
  \bibinfo{pages}{108--119}, \doi{10.4230/LIPIcs.FSTTCS.2010.108}.

\bibitemdeclare{inproceedings}{BBEK-icalp11}
\bibitem{BBEK-icalp11}
\bibinfo{author}{T.~\surnamestart Br{\'a}zdil\surnameend},
  \bibinfo{author}{V.~\surnamestart Brozek\surnameend},
  \bibinfo{author}{K.~\surnamestart Etessami\surnameend} \&
  \bibinfo{author}{A.~\surnamestart Ku{\v{c}}era\surnameend}
  (\bibinfo{year}{2011}): \emph{\bibinfo{title}{Approximating the Termination
  Value of One-Counter {MDPs} and Stochastic Games}}.
\newblock In: {\sl \bibinfo{booktitle}{Proc.\ 38th Int.\ Coll.\ Automata,
  Languages and Programming (ICALP 2011)}}, {\sl \bibinfo{series}{Lecture Notes
  in Computer Science}} \bibinfo{volume}{6756}, \bibinfo{publisher}{Springer},
  pp. \bibinfo{pages}{332--343}, \doi{10.1007/978-3-642-22012-8\_26}.

\bibitemdeclare{article}{BBKO-ic11}
\bibitem{BBKO-ic11}
\bibinfo{author}{T.~\surnamestart Br{\'a}zdil\surnameend},
  \bibinfo{author}{V.~\surnamestart Brozek\surnameend},
  \bibinfo{author}{A.~\surnamestart Ku{\v{c}}era\surnameend} \&
  \bibinfo{author}{J.~\surnamestart Obdrz{\'a}lek\surnameend}
  (\bibinfo{year}{2011}): \emph{\bibinfo{title}{Qualitative reachability in
  stochastic {BPA} games}}.
\newblock {\sl \bibinfo{journal}{Information \& Computation}}
  \bibinfo{volume}{209}(\bibinfo{number}{8}), pp. \bibinfo{pages}{1160--1183},
  \doi{10.1016/j.ic.2011.02.002}.

\bibitemdeclare{incollection}{brazdil2013}
\bibitem{brazdil2013}
\bibinfo{author}{T.~\surnamestart Br\'azdil\surnameend},
  \bibinfo{author}{A.~\surnamestart Ku{\v{c}}era\surnameend} \&
  \bibinfo{author}{P.~\surnamestart Novotn\'y\surnameend}
  (\bibinfo{year}{2013}): \emph{\bibinfo{title}{Determinacy in Stochastic Games
  with Unbounded Payoff Functions}}.
\newblock In: {\sl \bibinfo{booktitle}{Proc.\ 8th Int.\ Workshop Mathematical
  and Engineering Methods in Computer Science (MEMICS 2012)}}, {\sl
  \bibinfo{series}{Lecture Notes in Computer Science}} \bibinfo{volume}{7721},
  \bibinfo{publisher}{Springer}, pp. \bibinfo{pages}{94--105},
  \doi{10.1007/978-3-642-36046-6\_10}.

\bibitemdeclare{inproceedings}{CS-lics08}
\bibitem{CS-lics08}
\bibinfo{author}{P.~\surnamestart Chambart\surnameend} \&
  \bibinfo{author}{{\relax Ph}.~\surnamestart Schnoebelen\surnameend}
  (\bibinfo{year}{2008}): \emph{\bibinfo{title}{The Ordinal Recursive
  Complexity of Lossy Channel Systems}}.
\newblock In: {\sl \bibinfo{booktitle}{Proc.\ 23rd IEEE Symp.\ Logic in
  Computer Science (LICS 2008)}}, \bibinfo{publisher}{IEEE Comp.\ Soc.\ Press},
  pp. \bibinfo{pages}{205--216}, \doi{10.1109/LICS.2008.47}.

\bibitemdeclare{inproceedings}{CdAH-qest04}
\bibitem{CdAH-qest04}
\bibinfo{author}{K.~\surnamestart Chatterjee\surnameend},
  \bibinfo{author}{L.~\surnamestart de{ }Alfaro\surnameend} \&
  \bibinfo{author}{T.~A. \surnamestart Henzinger\surnameend}
  (\bibinfo{year}{2004}): \emph{\bibinfo{title}{Trading Memory for
  Randomness}}.
\newblock In: {\sl \bibinfo{booktitle}{Proc.\ 1st Int.\ Conf.\ Quantitative
  Evaluation of Systems (QEST 2004)}}, \bibinfo{publisher}{IEEE Comp.\ Soc.\
  Press}, pp. \bibinfo{pages}{206--217}, \doi{10.1109/QEST.2004.10051}.

\bibitemdeclare{inproceedings}{CJH-csl03}
\bibitem{CJH-csl03}
\bibinfo{author}{K.~\surnamestart Chatterjee\surnameend},
  \bibinfo{author}{M.~\surnamestart Jurdzinski\surnameend} \&
  \bibinfo{author}{T.~A. \surnamestart Henzinger\surnameend}
  (\bibinfo{year}{2003}): \emph{\bibinfo{title}{Simple Stochastic Parity
  Games}}.
\newblock In: {\sl \bibinfo{booktitle}{Proc.\ 17th Int.\ Workshop Computer
  Science Logic (CSL 2003) and 8th Kurt G{\"o}del Coll.\ (KGL 2003)}}, {\sl
  \bibinfo{series}{Lecture Notes in Computer Science}} \bibinfo{volume}{2803},
  \bibinfo{publisher}{Springer}, pp. \bibinfo{pages}{100--113},
  \doi{10.1007/978-3-540-45220-1\_11}.

\bibitemdeclare{inproceedings}{chen2007}
\bibitem{chen2007}
\bibinfo{author}{Taolue \surnamestart Chen\surnameend} \& \bibinfo{author}{Jian
  \surnamestart Lu\surnameend} (\bibinfo{year}{2007}):
  \emph{\bibinfo{title}{Probabilistic Alternating-time Temporal Logic and Model
  Checking Algorithm}}.
\newblock In: {\sl \bibinfo{booktitle}{Proc.\ 4th Int.\ Conf.\ Fuzzy Systems
  and Knowledge Discovery (FSKD 2007), Aug.\ 2007, Haikou, Hainan, China}},
  \bibinfo{publisher}{IEEE Comp.\ Soc.\ Press}, pp. \bibinfo{pages}{35--39},
  \doi{10.1109/FSKD.2007.458}.

\bibitemdeclare{article}{condon-ic92}
\bibitem{condon-ic92}
\bibinfo{author}{A.~\surnamestart Condon\surnameend} (\bibinfo{year}{1992}):
  \emph{\bibinfo{title}{The Complexity of Stochastic Games}}.
\newblock {\sl \bibinfo{journal}{Information and Computation}}
  \bibinfo{volume}{96}(\bibinfo{number}{2}), pp. \bibinfo{pages}{203--224},
  \doi{10.1016/0890-5401(92)90048-K}.

\bibitemdeclare{article}{EY-lmcs08}
\bibitem{EY-lmcs08}
\bibinfo{author}{K.~\surnamestart Etessami\surnameend} \&
  \bibinfo{author}{M.~\surnamestart Yannakakis\surnameend}
  (\bibinfo{year}{2008}): \emph{\bibinfo{title}{Recursive Concurrent Stochastic
  Games}}.
\newblock {\sl \bibinfo{journal}{Logical Methods in Computer Science}}
  \bibinfo{volume}{4}(\bibinfo{number}{4}), \doi{10.2168/LMCS-4(4:7)2008}.

\bibitemdeclare{inproceedings}{JKS-tcs2012}
\bibitem{JKS-tcs2012}
\bibinfo{author}{P.~\surnamestart Jan\v{c}ar\surnameend},
  \bibinfo{author}{P.~\surnamestart Karandikar\surnameend} \&
  \bibinfo{author}{{\relax Ph}.~\surnamestart Schnoebelen\surnameend}
  (\bibinfo{year}{2012}): \emph{\bibinfo{title}{Unidirectional channel systems
  can be tested}}.
\newblock In: {\sl \bibinfo{booktitle}{Proc.\ 7th IFIP Int.\ Conf.\ on
  Theoretical Computer Science (IFIP TCS 2012)}}, {\sl \bibinfo{series}{Lecture
  Notes in Computer Science}} \bibinfo{volume}{7604},
  \bibinfo{publisher}{Springer}, pp. \bibinfo{pages}{149--163},
  \doi{10.1007/978-3-642-33475-7\_11}.

\bibitemdeclare{inproceedings}{KS-fossacs2013}
\bibitem{KS-fossacs2013}
\bibinfo{author}{P.~\surnamestart Karandikar\surnameend} \&
  \bibinfo{author}{S.~\surnamestart Schmitz\surnameend} (\bibinfo{year}{2013}):
  \emph{\bibinfo{title}{The Parametric Ordinal-Recursive Complexity of {Post}
  Embedding Problems}}.
\newblock In: {\sl \bibinfo{booktitle}{Proc.\ 16th Int.\ Conf.\ Foundations of
  Software Science and Computational Structures (FOSSACS 2013)}}, {\sl
  \bibinfo{series}{Lecture Notes in Computer Science}} \bibinfo{volume}{7794},
  \bibinfo{publisher}{Springer}, pp. \bibinfo{pages}{273--288},
  \doi{10.1007/978-3-642-37075-5\_18}.

\bibitemdeclare{article}{kesten2001}
\bibitem{kesten2001}
\bibinfo{author}{Y.~\surnamestart Kesten\surnameend},
  \bibinfo{author}{O.~\surnamestart Maler\surnameend},
  \bibinfo{author}{M.~\surnamestart Marcus\surnameend},
  \bibinfo{author}{A.~\surnamestart Pnueli\surnameend} \&
  \bibinfo{author}{E.~\surnamestart Shahar\surnameend} (\bibinfo{year}{2001}):
  \emph{\bibinfo{title}{Symbolic model checking with rich assertional
  languages}}.
\newblock {\sl \bibinfo{journal}{Theoretical Computer Science}}
  \bibinfo{volume}{256}(\bibinfo{number}{1--2}), pp. \bibinfo{pages}{93--112},
  \doi{10.1016/S0304-3975(00)00103-1}.

\bibitemdeclare{inproceedings}{konev2005}
\bibitem{konev2005}
\bibinfo{author}{B.~\surnamestart Konev\surnameend},
  \bibinfo{author}{F.~\surnamestart Wolter\surnameend} \&
  \bibinfo{author}{M.~\surnamestart Zakharyaschev\surnameend}
  (\bibinfo{year}{2005}): \emph{\bibinfo{title}{Temporal logics over transitive
  states}}.
\newblock In: {\sl \bibinfo{booktitle}{Proc.\ 20th Conf.\ Automated Deduction
  (CADE 2005)}}, {\sl \bibinfo{series}{Lecture Notes in Computer Science}}
  \bibinfo{volume}{3632}, \bibinfo{publisher}{Springer}, pp.
  \bibinfo{pages}{182--203}, \doi{10.1007/11532231\_14}.

\bibitemdeclare{inproceedings}{kouzmin2004}
\bibitem{kouzmin2004}
\bibinfo{author}{E.~V. \surnamestart Kouzmin\surnameend},
  \bibinfo{author}{N.~V. \surnamestart Shilov\surnameend} \&
  \bibinfo{author}{V.~A. \surnamestart Sokolov\surnameend}
  (\bibinfo{year}{2004}): \emph{\bibinfo{title}{Model Checking mu-Calculus in
  Well-Structured Transition Systems}}.
\newblock In: {\sl \bibinfo{booktitle}{Proc.\ 11th Int.\ Symp.\ Temporal
  Representation and Reasoning (TIME 2004)}}, \bibinfo{publisher}{IEEE Comp.\
  Soc.\ Press}, pp. \bibinfo{pages}{152--155}, \doi{10.1109/TIME.2004.1314433}.

\bibitemdeclare{article}{martin98}
\bibitem{martin98}
\bibinfo{author}{D.~A. \surnamestart Martin\surnameend} (\bibinfo{year}{1998}):
  \emph{\bibinfo{title}{The determinacy of {Blackwell} Games}}.
\newblock {\sl \bibinfo{journal}{The Journal of Symbolic Logic}}
  \bibinfo{volume}{63}(\bibinfo{number}{4}), pp. \bibinfo{pages}{1565--1581},
  \doi{10.2307/2586667}.

\bibitemdeclare{inproceedings}{muscholl2010}
\bibitem{muscholl2010}
\bibinfo{author}{A.~\surnamestart Muscholl\surnameend} (\bibinfo{year}{2010}):
  \emph{\bibinfo{title}{Analysis of Communicating Automata}}.
\newblock In: {\sl \bibinfo{booktitle}{Proc.\ 4th Int.\ Conf.\ Language and
  Automata Theory and Applications (LATA 2010)}}, {\sl \bibinfo{series}{Lecture
  Notes in Computer Science}} \bibinfo{volume}{6031},
  \bibinfo{publisher}{Springer}, pp. \bibinfo{pages}{50--57},
  \doi{10.1007/978-3-642-13089-2\_4}.

\bibitemdeclare{inproceedings}{purush97}
\bibitem{purush97}
\bibinfo{author}{S.~\surnamestart {Purushothaman Iyer}\surnameend} \&
  \bibinfo{author}{M.~\surnamestart Narasimha\surnameend}
  (\bibinfo{year}{1997}): \emph{\bibinfo{title}{Probabilistic Lossy Channel
  Systems}}.
\newblock In: {\sl \bibinfo{booktitle}{Proc.\ 7th Int.\ Joint Conf.\ Theory and
  Practice of Software Development (TAPSOFT '97)}}, {\sl
  \bibinfo{series}{Lecture Notes in Computer Science}} \bibinfo{volume}{1214},
  \bibinfo{publisher}{Springer}, pp. \bibinfo{pages}{667--681},
  \doi{10.1007/BFb0030633}.

\bibitemdeclare{book}{puterman94}
\bibitem{puterman94}
\bibinfo{author}{M.~L. \surnamestart Puterman\surnameend}
  (\bibinfo{year}{1994}): \emph{\bibinfo{title}{{Markov} decision processes:
  discrete stochastic dynamic programming}}.
\newblock \bibinfo{publisher}{John Wiley \& Sons}, \doi{10.1002/9780470316887}.

\bibitemdeclare{article}{rabinovich2006}
\bibitem{rabinovich2006}
\bibinfo{author}{A.~\surnamestart Rabinovich\surnameend}
  (\bibinfo{year}{2006}): \emph{\bibinfo{title}{Quantitative analysis of
  probabilistic lossy channel systems}}.
\newblock {\sl \bibinfo{journal}{Information and Computation}}
  \bibinfo{volume}{204}(\bibinfo{number}{5}), pp. \bibinfo{pages}{713--740},
  \doi{10.1016/j.ic.2005.11.004}.

\bibitemdeclare{inproceedings}{SS-icalp11}
\bibitem{SS-icalp11}
\bibinfo{author}{S.~\surnamestart Schmitz\surnameend} \&
  \bibinfo{author}{{\relax Ph}.~\surnamestart Schnoebelen\surnameend}
  (\bibinfo{year}{2011}): \emph{\bibinfo{title}{Multiply-Recursive Upper Bounds
  with {Higman}'s Lemma}}.
\newblock In: {\sl \bibinfo{booktitle}{Proc.\ 38th Int.\ Coll.\ Automata,
  Languages, and Programming (ICALP 2011)}}, {\sl \bibinfo{series}{Lecture
  Notes in Computer Science}} \bibinfo{volume}{6756},
  \bibinfo{publisher}{Springer}, pp. \bibinfo{pages}{441--452},
  \doi{10.1007/978-3-642-22012-8\_35}.

\bibitemdeclare{misc}{SS-esslli2012}
\bibitem{SS-esslli2012}
\bibinfo{author}{S.~\surnamestart Schmitz\surnameend} \&
  \bibinfo{author}{{\relax Ph}.~\surnamestart Schnoebelen\surnameend}
  (\bibinfo{year}{2012}): \emph{\bibinfo{title}{Algorithmic Aspects of {WQO}
  Theory}}.
\newblock \bibinfo{howpublished}{Lecture notes}.
\newblock \bibinfo{note}{Available at
  \url{http://cel.archives-ouvertes.fr/cel-00727025}}.

\bibitemdeclare{incollection}{Sch-voss}
\bibitem{Sch-voss}
\bibinfo{author}{{\relax Ph}.~\surnamestart Schnoebelen\surnameend}
  (\bibinfo{year}{2004}): \emph{\bibinfo{title}{The Verification of
  Probabilistic Lossy Channel Systems}}.
\newblock In: {\sl \bibinfo{booktitle}{Validation of Stochastic Systems: A
  Guide to Current Research}}, {\sl \bibinfo{series}{Lecture Notes in Computer
  Science}} \bibinfo{volume}{2925}, \bibinfo{publisher}{Springer}, pp.
  \bibinfo{pages}{445--465}, \doi{10.1007/978-3-540-24611-4\_13}.

\bibitemdeclare{article}{shapley53}
\bibitem{shapley53}
\bibinfo{author}{L.~S. \surnamestart Shapley\surnameend}
  (\bibinfo{year}{1953}): \emph{\bibinfo{title}{Stochastic games}}.
\newblock {\sl \bibinfo{journal}{Proc. Natl. Acad. Sci. USA}}
  \bibinfo{volume}{39}(\bibinfo{number}{10}), pp. \bibinfo{pages}{1095--1100},
  \doi{10.1073/pnas.39.10.1095}.

\bibitemdeclare{inproceedings}{zielonka2010}
\bibitem{zielonka2010}
\bibinfo{author}{W.~\surnamestart Zielonka\surnameend} (\bibinfo{year}{2010}):
  \emph{\bibinfo{title}{Playing in stochastic environment: from multi-armed
  bandits to two-player games}}.
\newblock In: {\sl \bibinfo{booktitle}{Proc.\ 30th Conf.\ Found.\ of Software
  Technology and Theor.\ Comp.\ Sci.\ (FST\&TCS 2010)}}, {\sl
  \bibinfo{series}{Leibniz Int.\ Proc.\ in Informatics}}~\bibinfo{volume}{8},
  \bibinfo{publisher}{Leibniz-Zentrum f{\"u}r Informatik}, pp.
  \bibinfo{pages}{65--72}, \doi{10.4230/LIPIcs.FSTTCS.2010.65}.

\end{thebibliography}
}


\appendix

\section{Proof of Equation~\eqref{eq-mu-nu-contractive} page~\pageref{eq-mu-nu-contractive}}

Section~\ref{sec-charac} relies on the following Lemma for simplifying the
characterization of winning sets for simple B\"uchi objectives:

\begin{lemma}[Contractive $\bm{\nu}$-$\bm{\mu}$ fixpoint]
\label{lem-mu-nu-contractive}
For any binary (monotonic) operator $f$, $\nu X.\mu Y.X\cap f(X,Y) = \nu
X.\mu Y. f(X,Y)$.
\end{lemma}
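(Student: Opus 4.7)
Let me write $g_1(X) \egdef \mu Y.\,X\cap f(X,Y)$ and $g_2(X) \egdef \mu Y.\,f(X,Y)$, so that the claim is $\nu X.g_1(X) = \nu X.g_2(X)$. Call the two sides $W_1$ and $W_2$ respectively. I will establish the two inclusions separately; the nontrivial one is $W_2\subseteq W_1$.

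For $W_1\subseteq W_2$, I would just use pointwise comparison of the two inner least fixpoints. Since $X\cap f(X,Y)\subseteq f(X,Y)$ for every $X,Y$, the monotonicity of $\mu$ in its body gives $g_1(X)\subseteq g_2(X)$ for all $X$, and hence $\nu X.g_1(X)\subseteq \nu X.g_2(X)$, i.e.\ $W_1\subseteq W_2$. This takes a line.

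For the converse $W_2\subseteq W_1$, the plan is to show that $W_2$ is itself a post-fixpoint of $g_1$, so that by the coinduction principle $W_2\subseteq \nu X.g_1(X) = W_1$. Unfolding $W_2 = g_2(W_2) = \mu Y.\,f(W_2,Y)$, I note that $W_2$ is a fixpoint of $Y\mapsto f(W_2,Y)$, hence $W_2=f(W_2,W_2)$. I then claim the identity
\[
g_1(W_2) \;=\; \mu Y.\,W_2\cap f(W_2,Y) \;=\; \mu Y.\,f(W_2,Y) \;=\; W_2.
\]
To prove this, I would compare the two transfinite approximant sequences $(Y_\alpha)$ of $\mu Y.\,W_2\cap f(W_2,Y)$ and $(Z_\alpha)$ of $\mu Y.\,f(W_2,Y)$. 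Both start at $\emptyset$, and at limit ordinals both are unions, so the issue is only the successor step. By transfinite induction on $\alpha$ one shows $Z_\alpha\subseteq W_2$, using $Z_{\alpha+1}=f(W_2,Z_\alpha)\subseteq f(W_2,W_2)=W_2$. Consequently $Z_\alpha = W_2\cap Z_\alpha$, and if $Y_\alpha=Z_\alpha$ then
\[
Y_{\alpha+1} \;=\; W_2\cap f(W_2,Y_\alpha) \;=\; W_2\cap f(W_2,Z_\alpha) \;=\; W_2\cap Z_{\alpha+1} \;=\; Z_{\alpha+1},
\]
which closes the induction. Hence $g_1(W_2)=W_2$, so $W_2$ is even a fixpoint (not merely a post-fixpoint) of $g_1$, and we conclude $W_2\subseteq W_1$.

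The main obstacle is the $W_2\subseteq W_1$ direction, because $g_1$ is strictly contractive ($g_1(X)\subseteq X$) whereas $g_2$ is not, so one cannot expect the inner fixpoints to coincide for arbitrary $X$; the point is precisely that they do coincide at $X=W_2$, which is why the identity $W_2=f(W_2,W_2)$ inherited from the outer $\nu$ is essential. Everything else is routine manipulation of approximants and Tarski's coinduction principle.
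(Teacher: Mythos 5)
Your proof is correct, and the hard direction is organized genuinely differently from the paper's. The paper proves $\nu X.\mu Y.f(X,Y)\subseteq\nu X.\mu Y.X\cap f(X,Y)$ by a transfinite induction over the approximants $(V_\alpha)$ of the \emph{left-hand} side, maintaining three invariants simultaneously ($\mu Y.f(V_\alpha,Y)\subseteq V_\alpha$, $\mu Y.f(V_\alpha,Y)=V_{\alpha+1}$, and $U\subseteq V_\alpha$) and invoking an auxiliary lemma (Lemma~\ref{lem-mu1}) stating that $\mu Y.V\cap h(Y)=\mu Y.h(Y)$ whenever $V\supseteq\mu Y.h(Y)$. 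You instead apply Knaster--Tarski coinduction directly to the right-hand side $W_2$: unfolding both fixpoints gives $W_2=f(W_2,W_2)$, a single approximant comparison shows the two inner least fixpoints coincide at $X=W_2$, and hence $W_2$ is a fixpoint of the body of the left-hand $\nu$, so it is contained in its greatest fixpoint. This dispenses with the outer transfinite induction entirely and is shorter; your inner approximant argument is essentially the paper's Lemma~\ref{lem-mu1} specialized to $h(Y)=f(W_2,Y)$ and $V=W_2$ (and your step $Z_\alpha\subseteq W_2$ follows even more directly from $\mu Y.f(W_2,Y)=W_2$, since approximants of a least fixpoint never exceed it). Both routes rest on the same observation---intersecting the body of a $\mu$ with a superset of its fixpoint does not change the fixpoint---but yours localizes it at the single point $W_2$ where it is needed, whereas the paper propagates it through every approximant of the outer $\nu$; the paper's version has the mild advantage of exhibiting $U\subseteq V_\alpha$ stage by stage, which is sometimes useful when one also wants convergence information about the approximants.
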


This is a purely algebraic and lattice-theoretical result that is not
specific to stochastic games or channel systems. We include
its proof here for the sake of completeness.

We start with a simpler lemma: let $h$ be a unary (monotonic) operator.
\begin{lemma}
\label{lem-mu1}
Assume $U = \mu Y.h(Y)$ and $V\supseteq U$. Then $\mu Y.V\cap h(Y)=U$.
\end{lemma}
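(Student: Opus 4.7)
The plan is to establish the two inclusions $\mu Y.V\cap h(Y) \subseteq U$ and $U \subseteq \mu Y.V\cap h(Y)$ separately, setting $U' \egdef \mu Y.V\cap h(Y)$ and $g(Y) \egdef V\cap h(Y)$ for brevity. Both operators are monotonic (the former by assumption, the latter since intersection with a constant preserves monotonicity), so both fixpoints are well-defined.

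For $U' \subseteq U$, I would show that $U$ is a pre-fixpoint of $g$ and invoke the pre-fixpoint characterization of least fixpoints. Using the hypothesis $h(U)=U$ together with $U\subseteq V$, we compute
\[
g(U) \;=\; V\cap h(U) \;=\; V\cap U \;=\; U,
\]
so $U$ is actually a fixpoint of $g$, in particular $g(U)\subseteq U$, and hence $U' \subseteq U$.

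For the reverse inclusion $U\subseteq U'$, I would proceed by transfinite induction on the ordinal-indexed approximants $(U_\alpha)_{\alpha\in\Ord}$ of $U$ (with $U_0=\emptyset$, $U_{\beta+1}=h(U_\beta)$, and $U_\lambda=\bigcup_{\beta<\lambda}U_\beta$) and $(U'_\alpha)_{\alpha\in\Ord}$ of $U'$ (defined analogously from $g$), proving $U_\alpha\subseteq U'_\alpha$ for every $\alpha$. The base and limit cases are immediate. For the successor step, the key observation is that every approximant $U_\beta$ lies inside $U$, hence inside $V$; therefore $U_{\beta+1}=h(U_\beta)\subseteq V$ as well. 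Combined with the induction hypothesis $U_\beta\subseteq U'_\beta$ and the monotonicity of $h$, this yields
\[
U_{\beta+1} \;=\; h(U_\beta) \;=\; V \cap h(U_\beta) \;\subseteq\; V \cap h(U'_\beta) \;=\; U'_{\beta+1}.
\]
Passing to the stationary limit gives $U\subseteq U'$.

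The argument has no real obstacle; the only point requiring care is the transfinite induction, which is needed because $h$ is not assumed to be $\bigcup$-continuous. The conceptual content is that intersecting $h$ with $V$ does not alter the ascending chain of approximants starting from $\emptyset$, since that chain is already confined to $U\subseteq V$.
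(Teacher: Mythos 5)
Your proof is correct and follows essentially the same route as the paper's: the easy inclusion is a one-line consequence of standard fixpoint principles (you verify $U$ is a pre-fixpoint of $Y\mapsto V\cap h(Y)$, the paper instead uses monotonicity of $\mu$ in the operator from $V\cap h(Y)\subseteq h(Y)$ --- both standard), and the harder inclusion is a transfinite induction on the approximants of $U$ whose key step is exactly the paper's observation that $h(U_\beta)\subseteq h(U)=U\subseteq V$, so that intersecting with $V$ is harmless. The only cosmetic difference is that you compare approximants stage-by-stage ($U_\alpha\subseteq U'_\alpha$) where the paper compares each $U_\alpha$ directly against the fixpoint $W=\mu Y.V\cap h(Y)$; both inductions go through.
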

\begin{proof}
Write $W$ for $\mu Y.V\cap h(Y)$. Now $V\cap h(Y)\subseteq h(Y)$ entails
$\mu Y.V\cap h(Y)\subseteq \mu Y.h(Y)$, i.e., $W\subseteq U$, by
monotonicity.

For the other inclusion, we consider the approximants
$(U_\alpha)_{\alpha\in\Ord}$ of $U$ and show, by induction over $\alpha$,
that $U_\alpha\subseteq W$ for all $\alpha$, which is sufficient since
$U=\bigcup_\alpha U_\alpha$.

The base case $\alpha=0$ is clear since $U_0=\emptyset$. For the inductive
case $\alpha=\beta+1$, one has $U_{\alpha}\egdef h(U_\beta)$. From
$U_\beta\subseteq W$ (the ind.\  hyp.) we deduce $h(U_\beta)\subseteq h(W)$.
From $U_{\beta}\subseteq U$ and $h(U)=U$, we deduce $h(U_\beta)\subseteq
h(U)=U\subseteq V$. Thus $U_{\alpha}\subseteq V\cap h(W) = W$. Now for a
limit $U_\lambda$, we obtain $U_{\lambda}\subseteq W$ from
$U_{\lambda}=\bigcup_{\beta<\lambda}U_\beta$ and the ind.\  hyp.
\end{proof}

We may now prove Lemma~\ref{lem-mu-nu-contractive}. 
Write $g(X,Y)$ for $X\cap f(X,Y)$ and let $U\egdef\nu X.\mu Y.f(X,Y)$ and
$V\egdef\nu X.\mu Y.g(X,Y)$. From $g(X,Y)\subseteq f(X,Y)$ we derive
$V\subseteq U$ by monotonicity.

For the reverse inclusion, let $(V_\alpha)_{\alpha\in\Ord}$ be the
approximants of $V$. We claim that they satisfy the following inclusions
and equalities:
\begin{xalignat}{3}
\tag{P$_\alpha$, P$'_\alpha$, P$''_\alpha$}
\mu Y.f(V_\alpha,Y) &\subseteq V_\alpha
\:,
&
\mu Y.f(V_\alpha,Y) &= V_{\alpha+1}
\:,
&
U &\subseteq V_{\alpha}
\:,
\end{xalignat}
Note that (P$'_\alpha$) entails (P$_\alpha$) since $V_{\alpha_1}\subseteq
V_{\alpha_2}$ when $\alpha_1\geq \alpha_2$. Reciprocally (P$_\alpha$)
entails (P$'_\alpha$) since assuming (P$_\alpha$) and applying
Lemma~\ref{lem-mu1} on $h(Y)\egdef f(V_\alpha,Y)$ gives $\mu Y. f(V_\alpha,
Y)=\mu Y. V_\alpha\cap f(V_\alpha, Y)=\mu Y.g(V_\alpha,Y)$, which is the
definition of $V_{\alpha+1}$. Therefore it is sufficient to prove
(P$_\alpha$) and (P$''_\alpha$), which we do by induction over $\alpha$.

For the base case, (P$_0$) and (P$''_0$) are clear since $V_0\egdef \Conf$.

For the successor case $\alpha=\beta+1$, we start with $\mu
Y.f(V_\alpha,Y)\subseteq \mu Y.f(V_\beta,Y)$ ---by monotonicity, since
$V_\alpha\subseteq V_\beta$--- and combine with the
ind.\  hyp.\ (P$'_\beta$), i.e., $\mu Y.f(V_\beta,Y)=V_\alpha$, to obtain
(P$_\alpha$). For (P$''_\alpha$), we use the ind.\  hyp.\ $U\subseteq
V_\beta$ from which we deduce $\mu Y.f(U,Y)\subseteq\mu Y.f(V_\beta,Y)$,
i.e., $U\subseteq V_\alpha$, since $U=\mu Y.f(U,Y)$ by definition of $U$,
and $V_\alpha=\mu Y.f(V_\beta,Y)$ is the ind.\  hyp.\ (P$'_\beta$).

For the limit case $\alpha=\lambda$, one obtains (P$''_\lambda$) directly
from the ind.\  hyp.\  and the definition
$V_\lambda=\bigcap_{\beta<\lambda}V_\beta$. For (P$_\lambda$), we know $\mu
Y.f(V_\lambda,Y)\subseteq \mu Y.f(V_\beta,Y)$ for all $\beta<\lambda$ since
$V_\lambda\subseteq V_\beta$. Hence $\mu Y.f(V_\lambda,Y) \subseteq
\bigcap_{\beta<\lambda} \mu Y.f(V_\beta,Y) \subseteq\bigcap_{\beta<\lambda}
V_\beta$ (by ind.\  hyp.) $ = V_\lambda$.

Finally, since (P$''_\alpha$) holds for all $\alpha$ and since
$V=\bigcap_\alpha V_\alpha$, we deduce $U\subseteq V$.


\end{document}